\newcommand{\bcup}{\bigcup}
\newcommand{\mc}{\mathcal}
\newcommand{\mbb}{\mathbb}
\newcommand{\ra}{\rangle}
\newcommand{\la}{\langle}
\newcommand{\re}{\textup{Re}}
\newcommand{\im}{\textup{Im}}
\newcommand{\aff}{\text{Aff}}
\newcommand{\Span}{\text{Span}}
\newcommand{\pgamma}{|\gamma\ra\la\gamma|}
\newcommand{\ppsi}{|\psi\ra\la\psi|}
\newcommand{\pvarphi}{|\varphi\ra\la\varphi|}
\newcommand{\id}{\frac{1}{2}\mbb{I}}
\newtheorem{theorem}{Theorem}
\newtheorem{proposition}[theorem]{Proposition}
\newtheorem{lemma}[theorem]{Lemma}
\newtheorem{conjecture}[theorem]{Conjecture}
\begin{document}

\title[The Geometry of Qubit Weak Values]{The Geometry of Qubit Weak Values}
\author{J.~M.~Farinholt}
\address{Strategic \& Computing Systems Department, \\Naval Surface Warfare Ctr, Dahlgren Division, \\Dahlgren, VA 22448}
\ead{Jacob.Farinholt@navy.mil}
\author{A.~Ghazarians}
\address{University of California Berkeley \\Berkeley, CA 94704}
\ead{alanghaza@berkeley.edu}
\author{J.~E.~Troupe}
\address{Applied Research Laboratory,\\University of Texas at Austin, \\Austin, TX 78713}
\ead{jtroupe@arlut.utexas.edu}

\begin{abstract}
The concept of a \emph{weak value} of a quantum observable was developed in the late 1980s by Aharonov and colleagues to characterize the value of an observable for a quantum system in the time interval between two projective measurements. Curiously, these values often lie outside the eigenspectrum of the observable, and can even be complex-valued. Nevertheless, the weak value of a quantum observable has been shown to be a valuable resource in quantum metrology, and has received recent attention in foundational aspects of quantum mechanics. This paper is driven by a desire to more fully understand the underlying mathematical structure of weak values. In order to do this, we allow an observable to be \emph{any} Hermitian operator, and use the pre- and post-selected states to develop well-defined linear maps between the Hermitian operators and their corresponding weak values. We may then use the inherent Euclidean structure on Hermitian space to geometrically decompose a weak value of an observable. In the case in which the quantum systems are qubits, we provide a full geometric characterization of weak values.
\end{abstract}

\pacs{02.40.Dr, 03.65.Aa, 03.65.Ta}
\ams{51M15, 51P05, 81Q99}
\submitto{\JPA}

\maketitle
\section{Introduction}

In the late 1980s, Aharonov and colleagues defined what is known as the \emph{weak value} of a quantum mechanical observable \cite{AV90}. The weak value was introduced as a way of representing the value of an observable for a quantum system in the time interval between two projective measurements. The reason for considering such an object is that, unlike classical mechanics where the state of a system is determined completely for all future (and past) times by an initial state and the equations of motions, quantum mechanics allows a degree of freedom with respect to the outcomes of future measurements independent of the equations of motion. Therefore, weak values were introduced as a way to probe this uniquely quantum feature. Of course, in order for weak values to be of physical significance one must in principle be able to experimentally access them. The fundamental principle of measurement disturbance in quantum mechanics requires us to access weak values by performing sufficiently weak measurements on many copies of identical systems having the same initial and final states. The weak value is then proportional to the average weak measurement result of such a pre-selected and post-selected (PPS) ensemble of quantum systems, with the proportionality given by the scale of the coupling strength for the measurement interaction. Fundamentally, the weakness of the measurement is related to the amount of information about the observable that can be extracted from the single measurement interaction. While a single weak measurement cannot extract very much information by itself, a high precision estimate of the weak value can be made with a large PPS ensemble and independent weak measurements on each. 

Since their introduction, weak values have been the focus of a significant amount of interest in quantum metrology and foundations. In metrology, many experiments have used weak measurements and weak values to simplify high precision measurements \cite{Dixon09, Lyons15, Viza15}, including the first observation of the optical spin-Hall effect \cite{Hosten08}.  In foundations, recently the connection between the properties of weak values and so called "quantum paradoxes" has been clarified by Pusey and Leifer \cite{PuseyLeifer15}. In particular, it has been shown that observing anomalous weak values of certain observables demonstrates that any ontological (hidden variable) model of the quantum system must be contextual \cite{Pus14}. The connection between weak values and contextuality was recently used to design a new single qubit quantum key distribution protocol that is fundamentally resistant to some important side-channel attacks \cite{TF15}. Also, since it is a widely held conjecture that contextuality is a key resource for universal quantum computation \cite{Delfosse15, Howard14}, the ability to operationally measure the contextuality of a quantum system using weak measurements to estimate the weak value of selected observables holds potential as a very useful tool for implementing fault-tolerant universal quantum computation. Therefore, understanding the mathematical structure of weak values potentially holds great benefit for more deeply understanding the nature of quantum information and computation. 

In this article we will study the geometrical interplay between between the pre- and post-selected states of the quantum system and the observable being weakly measured, and will relate the geometry to the weak values of the observable. A particular focus will be on the requirements for the weak value to be entirely real or imaginary, and the effect of noise applied to the pre-selected state on the weak value. This is not the first paper to investigate geometric aspects of weak values (see e.g. \cite{Lobo14, Tam09} and references therein). In \cite{Lobo14}, the focus is on the geometric aspects of the system being measured, whereas \cite{Tam09} focuses on the geometry of the measurement observable. This paper, like \cite{Tam09}, is primarily focused on the geometry of the measurement observable. However, it is more general, in that in our case, the space of observables is the set of \emph{all} Hermitian operators, with the state projectors forming a special case. We believe that this approach provides a more complete picture from which to characterize and understand the mathematical framework underlying weak values.

\section{Weak Measurements and Weak Values\label{Sec:WeakMeasurements}}

Suppose the pre-selected state is given by $|\varphi\ra$ and post-selected state is given by $|\psi\ra$ (the ordered pair $(|\varphi\ra, |\psi\ra )$ will henceforth be referred to as the \emph{PPS ensemble}). The corresponding \emph{weak value} of an observable $\widehat{A}$ is given by
\begin{equation}\label{Eq:A_weak}
A_w = \frac{\la\psi|\widehat{A}|\varphi\ra}{\la\psi|\varphi\ra}.
\end{equation}

Contrary to strong measurement results, which always return an eigenvalue of the observable being measured, a PPS ensemble can be chosen such the the corresponding weak value of a nontrivial observable can be any complex number. This seemingly odd behavior of weak measurements drives us to more rigorously characterize the underlying mathematical structure of these measurements, in the hopes of gaining further insight into their nature and potential utility in physical systems.

To that end, we will assume that the quantum systems being weakly measured are finite-dimensional, so that the space of observables $\mbb{H}$ is the set of $n \times n$ Hermitian matrices, where $n \geq 2$. This set is closed under matrix addition and scalar multiplication over $\mbb{R}$, so that $\mbb{H}$ is then an $n^2$-dimensional real vector space.

Now, for any ordered pair $(|\varphi\ra, |\psi\ra)$ of distinct, nonorthogonal pure states, let $W_{\varphi, \psi} : \mbb{H} \rightarrow \mbb{C}$ be a complex linear function defined over $\mbb{H}$, referred to hereinafter as a \emph{weak function}, given by
\begin{equation}\label{Eq:WeakDef}
W_{\varphi, \psi}(M) := \frac{\la\psi|M|\varphi\ra}{\la\psi|\varphi\ra}.
\end{equation}
We choose the notation $W_{\varphi, \psi}(M)$ to denote the weak value of an observable rather than the standard notation of Eq. \eref{Eq:A_weak} from the physics literature for clarity and in order to make its mathematical nature as the image of a linear function more explicit. 

For much of this paper, we will be restricting ourselves to studying a particular subspace of observables, namely the $(n^2 - 1)$-dimensional trace-0 subspace $\mc{S} \subseteq \mbb{H}$, and we will often restrict the weak functions to this domain. In order to make the distinction between weak functions on $\mbb{H}$ and their restriction to the set $\mc{S}$ more explicit throughout the text, we will denote by $\mc{W}_{\varphi, \psi} : \mc{S} \rightarrow \mbb{C}$ the restriction of $W_{\varphi, \psi}$ to $\mc{S}$. Our reason for this restriction is the following.

First of all, the subspace $\mc{S}$ is naturally equipped with a scalar product and corresponding distance that makes it a well-defined Euclidean space. We may consequently analyze various geometric implications this has on the weak functions when restricted to this space. Secondly, if we extend this Euclidean structure over all of $\mbb{H}$, it is easy to see that for any $m \in \mbb{R}$, the set $\mc{A}^{(m)}$ of trace-$m$ Hermitian matrices forms a hyperplane in $\mbb{H}$, and can be viewed as an affine shift of $\mc{S}$. More explicitly, if $\mbb{H}$ denotes the $n \times n$ Hermitian matrices, then we may always write $\mc{A}^{(m)} = \mc{S} + \frac{m}{n}\mbb{I}$, where $\mbb{I}$ is the $n \times n$ identity matrix. It follows that up to the trace of the observable, the underlying structure of weak observables is essentially determined by the trace-0 subspace. For example, suppose $N \in \mbb{H}$ is any $n \times n$ Hermitian matrix, and let $M = N - \frac{\Tr(N)}{n}\mbb{I}$. Then $\Tr(M) = 0$, and
\begin{eqnarray}\label{Eq:Trace_1}
W_{\varphi, \psi}(N) &= \frac{\Tr(N)}{n}W_{\varphi, \psi}(\mbb{I}) + W_{\varphi, \psi}(M)\\ \label{Eq:Trace_2}
 \                   &= \frac{\Tr(N)}{n} + \mc{W}_{\varphi, \psi}(M).
\end{eqnarray}

While weak functions are well-defined over the $n \times n$ Hermitian matrices for any integer $n \geq 2$, for the remainder of this paper, we will restrict ourselves to the base case of $n = 2$. This case corresponds to weak measurements on qubits. Before developing a characterization in full generality, it is important to have a firm understanding of the base case. Indeed, we have found enough rich structure in the qubit case alone, that in the interest of keeping the number of pages of this paper to something even moderately reasonable, we have chosen to reserve the $n > 2$ case for future work.

Recall that, unlike strong measurements, which only return real values as their measurement results, a weak measurement result is in general complex-valued. An interesting question to investigate is what the relationship is between a weak observable and the pre- and post-selected ensemble that might cause the weak value to have a nonzero imaginary component. We will provide a full geometric answer to this question for the qubit case later. We will end this section with an example of a particular case when the imaginary component is necessarily nonzero. This will be used later to prove the more general case, so we state it as a Lemma.

\begin{lemma}\label{Lem:ImaginaryWeak}
 Suppose $|\varphi\ra$ and $|\psi\ra$ are distinct, nonorthogonal qubits, and let $|\gamma\ra$ be a state that is mutually unbiased to both $|\varphi\ra$ and $|\psi\ra$ (that is, $|\la \gamma | \varphi\ra | = |\la \psi | \gamma \ra | = 1/\sqrt{2}$). Then $ \im(W_{\varphi, \psi}( |\gamma\ra\la\gamma|)) \neq 0$.
\begin{proof}
 Because $|\varphi\ra$ and $|\gamma\ra$ are mutually unbiased by assumption, we have that $|\la\gamma|\varphi\ra| = \frac{1}{\sqrt{2}}$. However, with an appropriate global phase rotation applied to $|\varphi\ra$, we may assume without any loss of generality that in fact $\la\gamma|\varphi\ra = \frac{1}{\sqrt{2}}$. Thus, we may write
\begin{equation}\label{Eq:1}
 |\varphi\ra = \frac{1}{\sqrt{2}}(|\gamma\ra + \e^{i \phi}|\gamma_\bot\ra),
\end{equation}
 for some $\phi \in [0, 2\pi)$, where $|\gamma_\bot\ra$ is the orthogonal complement of $|\gamma\ra$. Similarly, since $|\psi\ra$ and $|\gamma\ra$ are also mutually unbiased, we may apply an appropriate global phase shift on $|\psi\ra$ such that $\la\psi|\gamma\ra = \frac{1}{\sqrt{2}}$. Hence, we may write
\begin{equation}\label{Eq:2}
 |\psi\ra = \frac{1}{\sqrt{2}}(|\gamma\ra + \e^{i \alpha}|\gamma_\bot\ra),
\end{equation}
 for some $\alpha \in [0, 2\pi)$.

 Once we have fixed these global phases on $|\varphi\ra$ and $|\psi\ra$ to force the two inner products $\la\psi|\gamma\ra$ and $\la\gamma|\varphi\ra$ to be real, we no longer have the luxury of assuming the inner product $\la\psi|\varphi\ra$ is real. However, using \eref{Eq:1} and \eref{Eq:2}, we calculate
\begin{equation}\label{Eq:InnerProduct}
 \la\psi|\varphi\ra = \frac{1}{2}\left(1 + \e^{i(\phi - \alpha)}\right).
\end{equation}
 It follows that the inner product $\la\psi|\varphi\ra$ is a real number if and only if $(\phi - \alpha) \in \{0, \pi\}$, which occurs if and only if either $|\psi\ra = |\varphi\ra$ or $|\psi\ra$ is orthogonal to $|\varphi\ra$, neither of which are true by assumption. Hence, the weak value
\begin{equation}
 W_{\varphi, \psi}( |\gamma\ra\la\gamma|) = \frac{\la\psi\pgamma \varphi\ra}{\la\psi|\varphi\ra}
\end{equation}
necessarily contains a nonzero imaginary component, which concludes our proof.
\end{proof}
\end{lemma}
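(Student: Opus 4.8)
The plan is to compute $W_{\varphi,\psi}(\pgamma)$ explicitly using a parametrization of $|\varphi\ra$ and $|\psi\ra$ adapted to $|\gamma\ra$, and show its imaginary part cannot vanish. First I would exploit the freedom to rescale each state by a global phase: since the hypothesis only constrains the moduli $|\la\gamma|\varphi\ra|$ and $|\la\psi|\gamma\ra|$, I can choose phases so that both of these inner products are the positive real number $1/\sqrt2$. Writing $|\varphi\ra = \tfrac{1}{\sqrt2}(|\gamma\ra + \e^{i\phi}|\gamma_\bot\ra)$ and $|\psi\ra = \tfrac{1}{\sqrt2}(|\gamma\ra + \e^{i\alpha}|\gamma_\bot\ra)$ for suitable $\phi,\alpha \in [0,2\pi)$ then fully captures the PPS pair once $|\gamma\ra$ is fixed as a basis vector.

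The key computation is then the numerator $\la\psi|\pgamma\varphi\ra = \la\psi|\gamma\ra\la\gamma|\varphi\ra = \tfrac12$, which is real, against the denominator $\la\psi|\varphi\ra = \tfrac12(1 + \e^{i(\phi-\alpha)})$. So the weak value is $\big(1 + \e^{i(\phi-\alpha)}\big)^{-1}$, and I reduce the claim to showing this is non-real. It is real precisely when $\e^{i(\phi-\alpha)}$ is real, i.e. when $\phi - \alpha \in \{0,\pi\}$. The final step is to observe that $\phi = \alpha$ forces $|\varphi\ra = |\psi\ra$ while $\phi - \alpha = \pi$ forces $\la\psi|\varphi\ra = 0$, both excluded by the hypothesis that the states are distinct and nonorthogonal; hence $\im(W_{\varphi,\psi}(\pgamma)) \neq 0$.

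The only subtle point — the part most worth stating carefully rather than the arithmetic — is the reduction in the last step: one must check that these parametrized representatives genuinely cover all admissible pairs, so that "$\phi-\alpha \in \{0,\pi\}$" really is equivalent to "$|\varphi\ra$ and $|\psi\ra$ are equal or orthogonal" and not merely implied by it. This is immediate here because, with $|\gamma\ra$ and $|\gamma_\bot\ra$ fixed as an orthonormal basis, every qubit mutually unbiased to $|\gamma\ra$ has exactly this form up to phase, and equality/orthogonality of two such vectors is controlled entirely by the relative phase $\phi - \alpha$. No genuine obstacle arises; the lemma is essentially a one-line computation once the phase normalization is set up, which is why I would present it mainly as motivation and bookkeeping for the general qubit characterization to follow.
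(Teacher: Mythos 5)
Your proposal is correct and follows essentially the same route as the paper: normalize global phases so that $\la\gamma|\varphi\ra = \la\psi|\gamma\ra = 1/\sqrt{2}$, expand both states in the $|\gamma\ra, |\gamma_\bot\ra$ basis, and observe that the numerator $\la\psi|\gamma\ra\la\gamma|\varphi\ra = 1/2$ is real while the denominator $\frac{1}{2}\left(1+\e^{i(\phi-\alpha)}\right)$ is real only when the states are equal or orthogonal, which is excluded. Your explicit remark that the parametrization covers all admissible pairs is a slight tidying of a step the paper leaves implicit, but the argument is the same.
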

 In fact, letting $\omega = (\alpha - \phi)$, we explicitly calculate the weak value to be
\begin{equation}\label{Eq:AwGamma}
 W_{\varphi, \psi}( |\gamma\ra\la\gamma|) = \frac{1}{2}\left(1 + i\tan\left(\frac{\omega}{2}\right)\right).
\end{equation}
In particular, if $|\varphi\ra$ and $|\psi\ra$ are mutually unbiased, then $\omega = \pm \frac{\pi}{2}$, so that the weak value then reduces to $\frac{1}{2}(1 \pm i)$.

Because we will be relying on the construction of Lemma \ref{Lem:ImaginaryWeak} throughout this paper, it will be helpful to standardize some notation. Throughout this paper, the state $|\gamma\ra$ will always refer to a state that is mutually unbiased to both the pre- and post-selected states $|\varphi\ra$ and $|\psi\ra$. In particular, we will always assume $|\varphi\ra$ and $|\psi\ra$ have a decomposition in terms of the $|\gamma\ra$, $|\gamma_\bot\ra$ basis as in Equations \eref{Eq:1} and \eref{Eq:2}, respectively. The phase term $\omega$ will always have the form $\omega = (\alpha - \phi)$, so that we may always rewrite Equation \eref{Eq:InnerProduct} as
\begin{equation}\label{Eq:Omega}
 \la\psi|\varphi\ra = \frac{1}{2}\left(1 + \e^{-i\omega}\right).
\end{equation}

\section{The Geometry of Density Operators\label{Sec:GeomOfDensityMatrices}}

Density operators are defined as the set of positive trace-1 Hermitian operators. These generalize the quantum states to include both pure and mixed states. The pure states map to rank 1 projectors ($|\psi\ra \mapsto \ppsi$), and the mixed states form the convex hull of the projectors. Recall from the previous section that the set $\mc{A}^{(1)}$ of trace-1 Hermitian matrices may be viewed as an affine shift of the vector space $\mc{S}$ of trace-0 Hermitian matrices. Restricting ourselves to the $n=2$ case, we may explicitly write $\mc{A}^{(1)} = \mc{S} + \frac{1}{2}\mbb{I}$. 

As was hinted at earlier, we can define a scalar product and distance over the space $\mc{S}$ that makes it a well-defined Euclidean space. The scalar product of two elements $A$ and $B$ of $\mc{S}$ is given by
\begin{equation}\label{Eq:ScalarProduct}
 (A,B) = \frac{1}{2}(\Tr AB),
\end{equation}
and the corresponding Euclidean distance squared is thus given by
\begin{equation}\label{Eq:Distance}
 D^2(A,B) = \frac{1}{2} \Tr(A-B)^2.
\end{equation}
Because of this added structure on the space $\mc{S}$, it is convenient to study the image $\mc{D} \subseteq \mc{S}$ of the set of density operators under the affine shift $\mc{A}^{(1)} \rightarrow \mc{S}$. Straightforward calculations reveal that the set $\mc{D}$ forms a 3-ball centered around the origin, having radius 1/2. Its surface, corresponding to the set of pure states, is composed entirely of elements of the form $\ppsi - \frac{1}{2}\mbb{I}$.  We will call $\mc{D}$ the \emph{Bloch ball} and its surface the \emph{Bloch sphere}.

It follows from \eref{Eq:ScalarProduct} and \eref{Eq:Distance} that two states are orthogonal if and only if the corresponding points in $\mc{D}$ sit at unit distance from each other. Consequently, orthogonal pure states correspond to antipodal points on the sphere.

Two qubit states $|\varphi\ra$ and $|\psi\ra$ are called \emph{mutually unbiased} if and only if $|\la \varphi |\psi\ra|^2 = \frac{1}{2}$. Straightforward calculations show that mutual unbiasedness of pure states $|\varphi\ra$ and $|\psi\ra$ is equivalent to orthogonality of the corresponding points $(\pvarphi - \frac{1}{2}\mbb{I})$ and $(\ppsi - \frac{1}{2}\mbb{I})$ in $\mc{D}$.

Naturally, any set of three orthogonal points on the surface of $\mc{D}$ forms a basis in $\mc{S}$. That is, if $|\varphi\ra$, $|\psi\ra$ and $|\gamma\ra$ are all three mutually unbiased, then any $2 \times 2$ trace-0 Hermitian operator $M$ can be uniquely represented as
\begin{equation}\label{Eq:Hermitian}
 M = a|\varphi\ra\la\varphi| + b|\psi\ra\la\psi| + c|\gamma\ra\la\gamma| - \frac{a+b+c}{2}\mbb{I},
\end{equation}
for $a, b, c \in \mbb{R}$. In particular, $M$ is an element of the Bloch ball if and only if $\sqrt{a^2 + b^2 + c^2} \leq 1$. Admittedly, this representation of the Bloch ball is a bit non-standard. However, if we let $|\varphi\ra$, $|\psi\ra$, and $|\gamma\ra$ be given by the $+1$ eigenstates of the Pauli spin $X$, $Y$, and $Z$ matrices, respectively, then one may show that $M$ is equivalently written as
\begin{equation}\label{Eq:Bloch}
 M = \frac{1}{2}(aX +bY + cZ),
\end{equation}
where $a$, $b$, and $c$ are the same exact values as in \eref{Eq:Hermitian}. Consequently, shifting $M$ back into the trace-1 space $M \mapsto M + \frac{1}{2}\mbb{I}$ gives us the more familiar representation
\begin{equation}
\frac{1}{2}(\mbb{I} + aX + bY + cZ)
\end{equation}
of a density operator. However, for our sake, it will be far more convenient to both remain in the trace-0 subspace $\mc{S}$ and avoid fixing a particular basis.

\section{The Geometry of Weak Values}\label{Sec:GeomWeakAll}

Earlier in this paper, we asked what the relationship is between a Hermitian matrix $M \in \mbb{H}$ (i.e. a weak observable) and a PPS ensemble $(|\varphi\ra, |\psi\ra)$ that might cause the corresponding weak value $W_{\varphi, \psi}(M)$ to have a nonzero imaginary component. We are now ready to address this question more formally in the form of a theorem, and it will serve as the motivation for the remainder of this section. The proof, in fact, will be given later in this section, after we have developed some of the necessary geometric tools. Before stating the theorem, we introduce some notation.

If $|\varphi\ra$ and $|\psi\ra$ are distinct, non-orthogonal pure states, then the span of their images in $\mc{S}$, $\Span_\mc{S}\{\pvarphi - \frac{1}{2}\mbb{I}, \ \ppsi - \frac{1}{2}\mbb{I}\}$, is a two-dimensional subspace, that is, a plane, in $\mc{S}$. If the weak value function we are characterizing is given by $\mc{W}_{\varphi, \psi}$ or $\mc{W}_{\psi, \varphi}$, then we call $\Span_\mc{S}\{\pvarphi - \frac{1}{2}\mbb{I}, \ \ppsi - \frac{1}{2}\mbb{I}\}$ the \emph{PPS plane}, and denote it by $H_{pps}$. Figure \ref{Fig:CirclePlot}(a) depicts the intersection of the PPS plane corresponding to some weak value function with the Bloch sphere.

\begin{theorem}\label{Thm:RealWeak}
 Suppose $|\varphi\ra$ and $|\psi\ra$ are distinct non-orthogonal qubits and $M$ is a trace-0 Hermitian operator. Then $\mc{W}_{\varphi, \psi}( M)$ is a strictly real number if and only if $M \in H_{pps}$.
\end{theorem}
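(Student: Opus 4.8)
The plan is to pick a convenient orthonormal basis of $\mc{S}$ adapted to the PPS plane $H_{pps}$, write $M$ in that basis, and compute $\mc{W}_{\varphi,\psi}(M)$ explicitly, isolating the imaginary part. Concretely, I would use the state $|\gamma\ra$ mutually unbiased to both $|\varphi\ra$ and $|\psi\ra$ supplied by Lemma~\ref{Lem:ImaginaryWeak}, together with the decompositions \eref{Eq:1} and \eref{Eq:2}. Since $|\varphi\ra,|\psi\ra$ both lie in the $|\gamma\ra,|\gamma_\bot\ra$ plane of Hilbert space, their images $\pvarphi-\id$ and $\ppsi-\id$ span the same plane $H_{pps}$ in $\mc{S}$ that $|\gamma\ra\la\gamma|-\id$ and $|\gamma_\bot\ra\la\gamma_\bot|-\id$ do (note these last two are antipodal, so really it is one direction plus the $X$-like direction of that great circle); a clean choice is to take an orthonormal basis $\{E_1,E_2,E_3\}$ of $\mc{S}$ with $E_1,E_2$ spanning $H_{pps}$ and $E_3$ the remaining (``$Y$-like'', i.e. $\propto |\gamma\ra\la\gamma|-\id$ rotated a quarter turn) direction orthogonal to it. Using the Bloch-type representation \eref{Eq:Bloch}, $E_3$ corresponds to a Pauli-like operator $P$ with $\la\varphi|P|\varphi\ra=\la\psi|P|\psi\ra=0$ but $\la\psi|P|\varphi\ra \neq 0$ purely imaginary relative to $\la\psi|\varphi\ra$.

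The key computation is then: for $M = m_1 E_1 + m_2 E_2 + m_3 E_3 \in \mc{S}$, linearity of $\mc{W}_{\varphi,\psi}$ gives $\mc{W}_{\varphi,\psi}(M) = m_1\mc{W}_{\varphi,\psi}(E_1) + m_2\mc{W}_{\varphi,\psi}(E_2) + m_3\mc{W}_{\varphi,\psi}(E_3)$. First I would check that $\mc{W}_{\varphi,\psi}(E_1)$ and $\mc{W}_{\varphi,\psi}(E_2)$ are real: this should follow because $E_1, E_2$ can be taken as real linear combinations of $\pvarphi-\id$ and $\ppsi-\id$, and $\mc{W}_{\varphi,\psi}(\pvarphi-\id) = 1 - \tfrac12$ while $\mc{W}_{\varphi,\psi}(\ppsi-\id) = \tfrac{\la\psi|\psi\ra\la\psi|\varphi\ra}{\la\psi|\varphi\ra}-\tfrac12 = \tfrac12$ are both real (this is the essential point and worth stating as its own observation: weak values of the PPS projectors themselves are always real). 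Then I would show $\mc{W}_{\varphi,\psi}(E_3)$ has nonzero imaginary part — this is exactly the content of Lemma~\ref{Lem:ImaginaryWeak} up to the real, basis-fixing change from $|\gamma\ra\la\gamma|-\id$ to $E_3$, and the explicit value \eref{Eq:AwGamma} shows $\im(\mc{W}_{\varphi,\psi}(E_3)) \propto \tan(\omega/2) \neq 0$ under the standing hypotheses. Consequently $\im(\mc{W}_{\varphi,\psi}(M)) = m_3 \cdot \im(\mc{W}_{\varphi,\psi}(E_3))$, which vanishes if and only if $m_3 = 0$, i.e. if and only if $M$ has no $E_3$-component, i.e. if and only if $M \in H_{pps}$. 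This gives both directions at once.

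The main obstacle I anticipate is being careful that $H_{pps}$ is genuinely the $\Span_\mc{S}$ of the two projector-images and really does coincide with the plane I build from $|\gamma\ra,|\gamma_\bot\ra$: one must verify that $|\varphi\ra,|\psi\ra$ distinct and nonorthogonal forces $\pvarphi-\id$ and $\ppsi-\id$ to be linearly independent (true, since equal projectors would mean $|\varphi\ra=|\psi\ra$ up to phase), so $H_{pps}$ is honestly two-dimensional, and that the mutually-unbiased $|\gamma\ra$ really does sit on the great circle through $|\varphi\ra$ and $|\psi\ra$ in the Bloch picture — equivalently, that $|\gamma\ra\la\gamma|-\id \in H_{pps}$. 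This is where I would lean on Section~\ref{Sec:GeomOfDensityMatrices}: mutual unbiasedness is orthogonality of Bloch vectors, and the $|\gamma\ra$ built in Lemma~\ref{Lem:ImaginaryWeak} from the shared $|\gamma_\bot\ra$-axis manifestly lies in the plane of $|\varphi\ra$ and $|\psi\ra$. Once that identification is pinned down, the rest is the linear-algebra bookkeeping sketched above together with a direct invocation of Lemma~\ref{Lem:ImaginaryWeak} and Equation~\eref{Eq:AwGamma}.
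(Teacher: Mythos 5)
Your computational skeleton is the right one, and it is essentially the paper's own proof: decompose $M$ with respect to a basis of $\mc{S}$ adapted to $H_{pps}$, note that the in-plane part has a real weak value (since $\mc{W}_{\varphi,\psi}(\pvarphi - \id) = \mc{W}_{\varphi,\psi}(\ppsi - \id) = \frac{1}{2}$ and weak functions are linear), and show that the out-of-plane component contributes a nonzero imaginary part via Lemma~\ref{Lem:ImaginaryWeak}. However, your resolution of what you yourself flag as the main obstacle is wrong, and wrong in a way that, taken literally, breaks the necessity direction. You assert that the mutually unbiased state $|\gamma\ra$ ``sits on the great circle through $|\varphi\ra$ and $|\psi\ra$,'' i.e.\ that $\pgamma - \id \in H_{pps}$, with $E_3$ being this vector ``rotated a quarter turn.'' The opposite is true: by Section~\ref{Sec:GeomOfDensityMatrices}, mutual unbiasedness is precisely orthogonality of the images in $\mc{S}$ under the scalar product \eref{Eq:ScalarProduct}, so $\pgamma - \id$ is orthogonal to both $\pvarphi - \id$ and $\ppsi - \id$ and is therefore the normal direction to $H_{pps}$; moreover $|\gamma_\bot\ra\la\gamma_\bot| - \id = -(\pgamma - \id)$, so those two images span a line, not a plane. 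Your claimed membership $\pgamma - \id \in H_{pps}$ is in fact inconsistent with the rest of your own argument: together with your sufficiency observation it would force $\mc{W}_{\varphi,\psi}(\pgamma - \id)$ to be real, contradicting Lemma~\ref{Lem:ImaginaryWeak}. And if $E_3$ really were related to $\pgamma - \id$ only by a quarter-turn rotation rather than by a nonzero real scalar, the key step $\im(\mc{W}_{\varphi,\psi}(E_3)) \neq 0$ would no longer follow from Lemma~\ref{Lem:ImaginaryWeak}, since weak values are not invariant under such rotations.

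The repair is immediate and lands you exactly on the paper's proof: take $E_3$ proportional to $\pgamma - \id$ itself, which is legitimate precisely because it is perpendicular to $H_{pps}$. Then $M = M_\| + c(\pgamma - \id)$ with $M_\| \in H_{pps}$, the in-plane term is real by your (correct) observation that both projector images have weak value $\frac{1}{2}$ (this is Proposition~\ref{Prop:RealWeakSufficiency}), and $\im(\mc{W}_{\varphi,\psi}(M)) = c\,\im(W_{\varphi,\psi}(\pgamma)) \neq 0$ unless $c = 0$, by Lemma~\ref{Lem:ImaginaryWeak} and \eref{Eq:AwGamma}. The paper argues in just this way, merely choosing the in-plane basis as $\{\pvarphi - \id,\ |\psi'\ra\la\psi'| - \id\}$ with $|\psi'\ra$ mutually unbiased to $|\varphi\ra$; that difference is cosmetic.
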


The fact that we can obtain necessary and sufficient conditions for a weak value to be real that are purely geometric in nature strongly motivates the possibility that a complete characterization of qubit weak measurements, at least over trace-0 Hermitian matrices, will be fundamentally geometric. In fact, observe that, for any pure states $|\varphi\ra$ and $|\psi\ra$, we trivially have $\mc{W}_{\varphi, \psi}( |\varphi\ra\la\varphi| - \frac{1}{2}\mbb{I}) = \mc{W}_{\varphi, \psi}(|\psi\ra\la\psi| - \frac{1}{2}\mbb{I}) = \frac{1}{2}$. More generally, let $M = (1-a)\ppsi + a\pvarphi - \frac{1}{2}\mbb{I}$ for some $a \in \mbb{R}$. Then $\mc{W}_{\varphi, \psi}(M) = \frac{1}{2}$. Thus, all elements of the affine line $\aff_{\mc{S}}\{\ppsi - \id , \ \pvarphi - \id\}$ map to the same value under the weak function $\mc{W}_{\varphi, \psi}$. With just a few simple calculations, one also sees that any parallel shift of this line along the PPS plane will be mapped to a distinct value under $\mc{W}_{\varphi, \psi}$. Consequently, we may call the line $\aff_{\mc{S}}\{\ppsi - \id , \ \pvarphi - \id\}$ (and any of its parallel shifts along the PPS plane) a \emph{weak value invariant} for the weak function $\mc{W}_{\varphi, \psi}$.  The remainder of this section will be dedicated to characterizing this invariance.

For a given PPS ensemble $(|\varphi\ra, |\psi\ra)$, define the point
\begin{equation}
 P_{\varphi, \psi} := \frac{1}{2}(\pvarphi + \ppsi - \mbb{I}). 
\end{equation}
This is the point exactly half-way between $\pvarphi - \id$ and $\ppsi - \id$ in the PPS plane, and hence the vector $\vec{P}_{\varphi, \psi}$ is perpendicular to $\aff_{\mc{S}}\{\ppsi - \id , \ \pvarphi - \id\}$ in $H_{pps}$. Any parallel shift of the line $\aff_{\mc{S}}\{\ppsi - \id , \ \pvarphi - \id\}$ along the PPS plane will contain some scalar multiple of $P_{\varphi, \psi}$. Thus, we define the function $\mc{K}_{\varphi, \psi}$ that maps a real number to a line in $H_{pps}$ via
\begin{equation}
 \mc{K}_{\varphi, \psi}(s) := \aff_{\mc{S}}\left\{\ppsi - \id , \ \pvarphi - \id\right\} + (s-1)P_{\varphi, \psi}.
\end{equation}

We see that the function $\mc{K}_{\varphi, \psi}$ shifts the line $\aff_{\mc{S}}\{\ppsi - \id , \ \pvarphi - \id\}$ so that it is centered over a designated scalar multiple of the point $P_{\varphi, \psi}$ (see Figure \ref{Fig:CirclePlot}(b)). In particular, $\mc{K}_{\varphi, \psi}(1) = \aff_{\mc{S}}\{\ppsi - \id , \ \pvarphi - \id\}$ and $\mc{K}_{\varphi, \psi}(0)$ contains the origin. Because each one of these lines is a weak value invariant, we may abuse our notation slightly and write instead
\begin{equation}
 \mc{W}_{\varphi, \psi}( \mc{K}_{\varphi, \psi}(s)) := \mc{W}_{\varphi, \psi}( M), \text{ for } M \text{ any element of } \mc{K}_{\varphi, \psi}(s).
\end{equation}
In particular,
\begin{eqnarray}\label{Eq:WeakK}
 \mc{W}_{\varphi, \psi}( \mc{K}_{\varphi, \psi}(s)) &= \mc{W}_{\varphi, \psi}(sP_{\varphi, \psi})\\
                                            \       &= s\mc{W}_{\varphi, \psi}(P_{\varphi, \psi})\\
                                            \       &= \frac{s}{2}.
\end{eqnarray}
In other words, for all $s \in \mbb{R}$, $\mc{K}_{\varphi, \psi}(s)$ is the preimage of $\frac{s}{2}$ under $\mc{W}_{\varphi, \psi}$.

\begin{figure}
\centering 

\begin{subfigure}[]{}%
\includegraphics[width=0.8\textwidth]{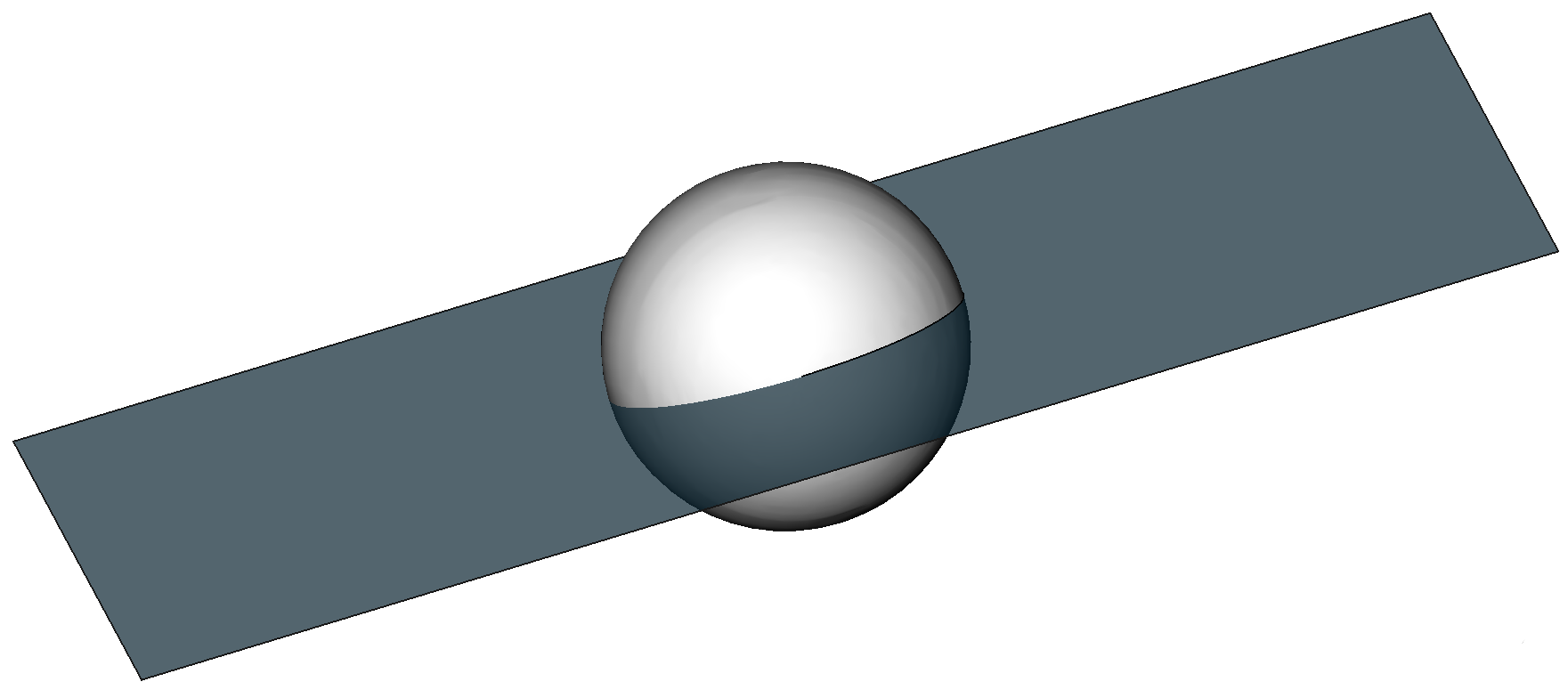}%
\end{subfigure}

\begin{subfigure}[]{}
\begin{tikzpicture}[scale=3]
\coordinate [] (origin) at (0,0);
\coordinate [label=right:$|\psi\ra\la\psi| - \frac{1}{2}\mbb{I}$] (psi) at (1,0);
\coordinate [label=right:$|\varphi\ra\la\varphi| - \frac{1}{2}\mbb{I}$] (phi) at (75:1cm);
\coordinate [label=below left:$P_{\varphi, \psi}$] (P) at ($ (phi)!.5!(psi) $) {};
\coordinate [thick, label=above right:$\mc{K}_{\varphi, \psi}(1)$] (Kbot) at ($ (phi)!2!(psi) $){};
\coordinate (Ktop) at ($ (psi)!1.5!(phi) $){};
\coordinate [label=right:$sP_{\varphi, \psi}$] (sP) at ($ (origin)!2!(P) $) {}; 
\coordinate (s*psi) at ($ (psi)+(P) $) {}; 
\coordinate [thick, label=above right:$\mc{K}_{\varphi, \psi}(s)$] (sKbot) at ($(Kbot)+(P) $){};
\coordinate (sKtop) at ($(Ktop)+(P) $){};

\node (psinode) at (psi) [circle, draw, fill, inner sep=1.5pt]{};
\node (phinode) at (phi) [circle, draw, fill, inner sep=1.5pt]{};
\node (Pnode) at (P) [circle, draw, fill, inner sep=1.5pt]{};
\node (sPnode) at (sP) [circle, draw, fill, inner sep=1.5pt]{};

\node (psipoint) at ($ (origin)!.999!(psi) $) {};
\node (phipoint) at ($ (origin)!.999!(phi) $) {};
\node (Ppoint) at ($ (origin)!.999!(P) $){};
\node (sPpoint) at ($ (origin)!.999!(sP)$){};

\draw (Ktop)--(Kbot);
\draw (sKtop)--(sKbot);
\draw [->, thick] (P)--(sPpoint);

\node (Circ) [draw,circle through=(phi)] at (origin) {};
\end{tikzpicture}
\end{subfigure}
\caption{(a) Intersection of the Bloch Sphere with the PPS plane of a given weak function $\mc{W}_{\varphi, \psi}$. (b) A shift of the line $\mc{K}_{\varphi, \psi}(1) = \aff_{\mc{S}}\{\ppsi - \id , \ \pvarphi - \id\}$ along the PPS plane. The circle denotes the intersection of the Bloch sphere with the PPS plane.}\label{Fig:CirclePlot}%
\end{figure}

It is straightforward to see that
\begin{equation}
 \bcup_{s \in \mbb{R}} \mc{K}_{\varphi, \psi}(s) = H_{pps}.
\end{equation}
Consequently, the sufficiency part of Theorem \ref{Thm:RealWeak} follows trivially. Stated explicitly:
\begin{proposition}\label{Prop:RealWeakSufficiency}
 For a given PPS ensemble $\{|\varphi\ra, |\psi\ra\}$, let $M$ be any Hermitian operator in the corresponding PPS plane $H_{pps}$. Then $\mc{W}_{\varphi, \psi}( M) \in \mbb{R}$.
\end{proposition}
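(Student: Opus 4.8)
The plan is to read the statement straight off the union identity $\bcup_{s \in \mbb{R}} \mc{K}_{\varphi, \psi}(s) = H_{pps}$ together with the computation in \eref{Eq:WeakK}. First I would fix an arbitrary $M \in H_{pps}$. By the union identity there is some $s \in \mbb{R}$ with $M \in \mc{K}_{\varphi, \psi}(s)$, and since every line $\mc{K}_{\varphi, \psi}(s)$ is a weak value invariant, the abuse of notation introduced just before the proposition gives $\mc{W}_{\varphi, \psi}(M) = \mc{W}_{\varphi, \psi}(\mc{K}_{\varphi, \psi}(s)) = \tfrac{s}{2} \in \mbb{R}$, which is the claim.

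An equivalent and perhaps more self-contained route is to argue directly by linearity. Since $M \in H_{pps} = \Span_\mc{S}\{\pvarphi - \id, \ \ppsi - \id\}$, write $M = a(\pvarphi - \id) + b(\ppsi - \id)$ with $a, b \in \mbb{R}$. Complex-linearity of $\mc{W}_{\varphi, \psi}$ then yields $\mc{W}_{\varphi, \psi}(M) = a\,\mc{W}_{\varphi, \psi}(\pvarphi - \id) + b\,\mc{W}_{\varphi, \psi}(\ppsi - \id)$. Plugging $\pvarphi$ and $\ppsi$ into \eref{Eq:WeakDef} and using $\la\varphi|\varphi\ra = \la\psi|\psi\ra = 1$ gives $W_{\varphi,\psi}(\pvarphi) = W_{\varphi,\psi}(\ppsi) = W_{\varphi,\psi}(\mbb{I}) = 1$, so $\mc{W}_{\varphi, \psi}(\pvarphi - \id) = \mc{W}_{\varphi, \psi}(\ppsi - \id) = \tfrac12$, and therefore $\mc{W}_{\varphi, \psi}(M) = \tfrac{a+b}{2} \in \mbb{R}$.

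There is essentially no obstacle here: this is the easy half of Theorem \ref{Thm:RealWeak}, and the work has already been absorbed into the construction of the maps $\mc{K}_{\varphi, \psi}$. The only point deserving a sentence of care is the union identity itself, i.e. that every element of the plane $H_{pps}$ lies on some shift $\mc{K}_{\varphi, \psi}(s)$. This holds because $P_{\varphi, \psi} \neq 0$ (otherwise $\pvarphi = \mbb{I} - \ppsi$, forcing $|\varphi\ra \perp |\psi\ra$, contrary to hypothesis) and $\vec{P}_{\varphi, \psi}$ is perpendicular within $H_{pps}$ to the direction of the line $\aff_\mc{S}\{\ppsi - \id, \ \pvarphi - \id\}$, as already observed; hence that direction vector together with $\vec{P}_{\varphi, \psi}$ span the two-dimensional space $H_{pps}$, so sweeping the line over all real multiples of $P_{\varphi, \psi}$ covers the whole plane. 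The genuinely substantive content, the converse direction of Theorem \ref{Thm:RealWeak}, is what remains to be handled later in this section.
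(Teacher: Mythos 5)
Your proposal is correct, and your first paragraph is exactly the paper's proof: the proposition is read off the identity $\bcup_{s \in \mbb{R}} \mc{K}_{\varphi, \psi}(s) = H_{pps}$ together with $\mc{W}_{\varphi, \psi}(\mc{K}_{\varphi, \psi}(s)) = \tfrac{s}{2} \in \mbb{R}$. Your second, direct linearity computation and your check that the shifted lines really sweep out all of $H_{pps}$ (using $P_{\varphi,\psi} \neq 0$ and its orthogonality to the line's direction) merely supply details the paper dismisses as ``straightforward to see,'' so the approach is essentially the same.
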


We may now prove Theorem \ref{Thm:RealWeak}.
\begin{proof}[Proof of Theorem \ref{Thm:RealWeak}]
Since sufficiency follows immediately from Proposition \ref{Prop:RealWeakSufficiency}, we need only show necessity.

Let $|\psi'\ra$ be mutually unbiased to $|\varphi\ra$ such that $|\psi\ra\la\psi|- \frac{1}{2}\mbb{I} \in \Span_\mc{S}\{|\varphi\ra\la\varphi|-\frac{1}{2}\mbb{I}, \ |\psi'\ra\la\psi'| - \frac{1}{2}\mbb{I}\}$. Then $\Span_\mc{S}\{|\varphi\ra\la\varphi|-\frac{1}{2}\mbb{I}, \ |\psi'\ra\la\psi'| - \frac{1}{2}\mbb{I}\} = H_{pps}$, and there exists a state $|\gamma\ra$ mutually unbiased to both $|\varphi\ra$ and $|\psi'\ra$, so that we may write an arbitrary trace-0 Hermitian operator $M$ as
\begin{equation}
 M = a\left(|\varphi\ra\la\varphi| - \frac{1}{2}\mbb{I}\right) + b\left(|\psi'\ra\la\psi'| - \frac{1}{2}\mbb{I}\right) + c\left(|\gamma\ra\la\gamma| - \frac{1}{2}\mbb{I}\right),
\end{equation}
for some $a, b, c \in \mbb{R}$. If $c = 0$, then $M \in H_{pps}$, so assume $c \neq 0$. Then by linearity of $\mc{W}_{\varphi, \psi}$, we see that
\begin{eqnarray*}
 \mc{W}_{\varphi, \psi}( M) &= \mc{W}_{\varphi, \psi}\left(a\left(|\varphi\ra\la\varphi| - \frac{1}{2}\mbb{I}\right) + b\left(|\psi'\ra\la\psi'| - \frac{1}{2}\mbb{I}\right)\right)\\
 \ &\ \ \ + c\mc{W}_{\varphi, \psi}\left(|\gamma\ra\la\gamma| - \frac{1}{2}\mbb{I}\right).
\end{eqnarray*}
By Proposition \ref{Prop:RealWeakSufficiency} it follows that $\mc{W}_{\varphi, \psi}\left(a\left(|\varphi\ra\la\varphi| - \frac{1}{2}\mbb{I}\right) + b\left(|\psi'\ra\la\psi'| - \frac{1}{2}\mbb{I}\right)\right)$ is real, since $a\left(|\varphi\ra\la\varphi| - \frac{1}{2}\mbb{I}\right) + b\left(|\psi'\ra\la\psi'| - \frac{1}{2}\mbb{I}\right)$ is in $H_{pps}$. Furthermore, we may write
\begin{eqnarray}
\mc{W}_{\varphi, \psi}\left(|\gamma\ra\la\gamma| - \frac{1}{2}\mbb{I}\right) &= W_{\varphi, \psi}\left(|\gamma\ra\la\gamma|\right) - W_{\varphi, \psi}\left(\frac{1}{2}\mbb{I}\right)\\
 \ &= W_{\varphi, \psi}\left(|\gamma\ra\la\gamma|\right) - \frac{1}{2}.
\end{eqnarray}
Consequently, it suffices to show that $W_{\varphi, \psi}( |\gamma\ra\la\gamma|) \not\in \mbb{R}$. But this follows immediately from Lemma \ref{Lem:ImaginaryWeak}. We conclude that if $M$ is not in $H_{pps}$, then $\mc{W}_{\varphi, \psi}( M) \not\in \mbb{R}$, thus proving the necessity.
\end{proof}

In fact, we may be even more explicit. Let $|\gamma\ra$ be mutually unbiased to both $|\varphi\ra$ and $|\psi\ra$, so that $|\varphi\ra$ and $|\psi\ra$ have the form of \eref{Eq:1} and \eref{Eq:2}, respectively. Let $\mc{K}_{\varphi, \psi}(s,a)$ denote the line $\mc{K}_{\varphi, \psi}(s) + a(\pgamma - \id)$. This is the line $\mc{K}_{\varphi, \psi}(s)$ lifted off of the PPS plane in a perpendicular manner by an amount determined by the scalar $a$. Then using Equations \eref{Eq:AwGamma} and \eref{Eq:WeakK}, we calculate
\begin{equation}
\mc{W}_{\varphi, \psi}( \mc{K}_{\varphi, \psi}(s,a)) = \frac{1}{2}\left(s + i a \tan\left(\frac{\omega}{2}\right)\right).
\end{equation}
Consequently, the imaginary component of the weak value contains two pieces of information. The value $\omega$ is directly related to the angle between $|\varphi\ra$ and $|\psi\ra$, and the value $a$ indicates how far away the line containing the observable is from the PPS plane. In particular, if $|\varphi\ra$ and $|\psi\ra$ are also mutually unbiased to each other, then the weak value reduces to
\begin{equation}
\mc{W}_{\varphi, \psi}( \mc{K}_{\varphi, \psi}(s,a)) = \frac{1}{2}(s \pm i a),
\end{equation}
where the sign of the imaginary component is determined by the sign of $\omega$.

We have thus obtained a full geometric characterization of the weak value of trace-0 qubit weak observables. It follows from Eq. \eref{Eq:Trace_1} and \eref{Eq:Trace_2} that we can completely characterize the weak value of \emph{any} qubit weak observable. Namely, suppose $N$ is any $2 \times 2$ Hermitian operator, and let $M = N - \frac{\Tr(N)}{2}\mbb{I}$. Suppose $\mc{W}_{\varphi, \psi}( M) = \frac{1}{2}(s + i a \tan(\frac{\omega}{2}))$. It follows that 
\begin{equation}\label{Eq:TotalWeakValue}
W_{\varphi, \psi}( N) = \frac{1}{2}\left(\Tr N + s + i a \tan\left(\frac{\omega}{2}\right)\right).
\end{equation}
This also implies that the real part of a weak value contains two pieces of information; namely, it relates to the trace of the weak observable, and the location of the line containing its projection onto the PPS plane.

\section{Weak Values of State Projectors}\label{Sec:WeakValueProj}

From an experimental point of view, perhaps the most important observables are state projectors. Recall that any Hermitian matrix can be written as a linear sum of projectors onto its eigenvectors (where the scalars are given by the corresponding eigenvalues). Thus, for any $2\times 2$ Hermitian matrix $M \in \mbb{H}$, there exists a pure state $|\psi\ra$ with orthogonal complement $|\psi_\bot\ra$ and scalars $a, b \in \mbb{R}$ such that $M$ can be written as 
\begin{equation}
M = a\ppsi + b|\psi_\bot\ra\la\psi_\bot|.
\end{equation}
Consequently, with an appropriate scaling of the measurement results, we can infer all the necessary information about an observable by simply measuring state projectors. Thus, in this section, we will consider weak measurements of observables that are points on the Bloch sphere. Because state projectors all have trace-1, it follows from Eq. \eref{Eq:TotalWeakValue} that we need only add $1/2$ to the resulting weak values in order to obtain the full characterization of the weak functions on state projectors.

Because the Bloch sphere is bounded, so too is its image in the complex plane under any weak function. Thus, we may further consider questions regarding optimization of weak functions over the Bloch sphere. Again taking experimental utility into consideration, we note that it is generally difficult to experimentally obtain both the real and imaginary parts of a weak value from a weak measurement, so we are generally left with choosing one of the two to measure.

Suppose we are interested in only obtaining the real part of a weak value. Note that, for any $a \in \mbb{R}$,  $\re(\mc{W}_{\varphi, \psi}(\mc{K}_{\varphi, \psi}(s,a))) = \frac{s}{2}$. Consequently, we gain nothing regarding the real part of a weak value by considering operators not on the PPS plane. Thus, we may restrict ourselves further to considering weak functions restricted to the intersection of the Bloch sphere with the corresponding PPS plane. This is simply a circle, so we denote this space by $\mc{C}$.

We may now ask the following question: Given a particular PPS ensemble $(|\varphi\ra, |\psi\ra)$, what are $\max_{M \in \mc{C}} \mc{W}_{\varphi, \psi}(M)$ and $\min_{M \in \mc{C}} \mc{W}_{\varphi, \psi}(M)$? These values are obtained by finding the largest and smallest values $s \in \mbb{R}$ such that the line $\mc{K}_{\varphi, \psi}(s) = \mc{K}_{\varphi, \psi}(s, 0)$ intersected with $\mc{C}$ is nonempty. Obviously, the maximum and minimum are reached precisely when $\mc{K}_{\varphi, \psi}(s)$ is tangent to the circle $\mc{C}$. These tangent points will both be multiples of the point $P_{\varphi, \psi}$. Thus, the problem reduces to finding the values $s \in \mbb{R}$ such that the points $sP_{\varphi, \psi}$ are points on the circle, that is, they are distance $1/2$ from the origin. Calculating, we find that $s = \pm 1/|\la \varphi |\psi\ra|$, so that
\begin{equation}
\max_{M \in \mc{C}} \mc{W}_{\varphi, \psi}(M) = \mc{W}_{\varphi, \psi}\left(\mc{K}_{\varphi, \psi}\left(\frac{1}{|\la \varphi |\psi\ra|}\right)\right) = \frac{1}{2|\la \varphi |\psi\ra|},
\end{equation}
and
\begin{equation}
\min_{M \in \mc{C}} \mc{W}_{\varphi, \psi}(M) = \mc{W}_{\varphi, \psi}\left(\mc{K}_{\varphi, \psi}\left(-\frac{1}{|\la \varphi |\psi\ra|}\right)\right) = -\frac{1}{2|\la \varphi |\psi\ra|},
\end{equation}

By shifting back to the trace-1 space, we obtain the following result.

\begin{theorem}
For a given PPS ensemble $(|\varphi\ra, |\psi\ra)$, the state projectors that produce the largest and smallest real weak values are given by 
\begin{equation}\label{Eq:H_MaxMin}
H^{\pm}_{\varphi, \psi} = \frac{1-s}{2}\mbb{I} + \frac{s}{2}(\pvarphi + \ppsi),
\end{equation}
where $s = \pm 1/|\la \varphi |\psi\ra|$, and the superscript of $H_{\varphi, \psi}$ is determined by the sign of $s$. The corresponding weak values are 
\begin{equation}\label{Eq:MaxRealWeak}
W_{\varphi, \psi}(H^\pm_{\varphi, \psi}) = \frac{1}{2}\left(1 \pm \frac{1}{|\la\varphi|\psi\ra|}\right).
\end{equation}
\end{theorem}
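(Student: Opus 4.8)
The plan is to reduce the statement to results already proved plus one short trace computation. Since $\re\left(\mc{W}_{\varphi, \psi}(\mc{K}_{\varphi, \psi}(s,a))\right) = s/2$ does not depend on the perpendicular displacement $a$ off the PPS plane, the largest and smallest real parts of weak values of state projectors are attained already on the circle $\mc{C}$ in which the Bloch sphere meets $H_{pps}$; and since by Theorem \ref{Thm:RealWeak} the weak value of a trace-$0$ operator is genuinely real exactly on $H_{pps}$, these are also the extreme genuinely-real weak values. Inside $H_{pps}$ the preimage of $s/2$ under $\mc{W}_{\varphi, \psi}$ is the line $\mc{K}_{\varphi, \psi}(s)$ by Eq. \eref{Eq:WeakK}, so I would look for the extreme values of $s$ for which $\mc{K}_{\varphi, \psi}(s)$ meets $\mc{C}$; these occur exactly when $\mc{K}_{\varphi, \psi}(s)$ is tangent to $\mc{C}$. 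Because the vector $\vec{P}_{\varphi, \psi}$ is perpendicular in $H_{pps}$ to $\mc{K}_{\varphi, \psi}(1)$, and every $\mc{K}_{\varphi, \psi}(s)$ is a parallel translate of $\mc{K}_{\varphi, \psi}(1)$, the foot of the perpendicular from the origin to $\mc{K}_{\varphi, \psi}(s)$ is the point $sP_{\varphi, \psi}$ lying on that line; thus tangency is precisely the requirement that $sP_{\varphi, \psi}$ lie on the sphere of radius $1/2$, and the distance from the origin to $\mc{K}_{\varphi, \psi}(s)$ equals $|s|\,\|P_{\varphi, \psi}\|$, which is strictly increasing in $|s|$ — this is what makes the two tangent values genuine extremes rather than mere critical points.

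The one real computation is then to solve $\frac12\Tr\left((sP_{\varphi, \psi})^2\right) = \frac14$. Expanding $P_{\varphi, \psi} = \frac12(\pvarphi + \ppsi - \mbb{I})$ and using $\Tr \pvarphi = \Tr \ppsi = \Tr \pvarphi^2 = \Tr \ppsi^2 = 1$ together with $\Tr(\pvarphi\ppsi) = |\la\varphi|\psi\ra|^2$ gives $\Tr(P_{\varphi, \psi}^2) = \frac12|\la\varphi|\psi\ra|^2$, so $s^2|\la\varphi|\psi\ra|^2 = 1$, i.e. $s = \pm 1/|\la\varphi|\psi\ra|$ (well defined because $|\varphi\ra$ and $|\psi\ra$ are non-orthogonal). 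Now $sP_{\varphi, \psi}$ lies on the Bloch sphere, hence is the trace-$0$ image of a rank-$1$ projector, and shifting back to the trace-$1$ picture gives
\begin{equation}
 sP_{\varphi, \psi} + \frac12\mbb{I} \;=\; \frac{s}{2}(\pvarphi + \ppsi) - \frac{s}{2}\mbb{I} + \frac12\mbb{I} \;=\; \frac{1-s}{2}\mbb{I} + \frac{s}{2}(\pvarphi + \ppsi) \;=\; H^{\pm}_{\varphi, \psi},
\end{equation}
which identifies the extremal state projectors as those in \eref{Eq:H_MaxMin}.

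Finally I would read off the weak values: since $\Tr H^{\pm}_{\varphi, \psi} = 1$, Equations \eref{Eq:Trace_1}--\eref{Eq:Trace_2} give $W_{\varphi, \psi}(H^{\pm}_{\varphi, \psi}) = \frac12 + \mc{W}_{\varphi, \psi}(sP_{\varphi, \psi})$, and \eref{Eq:WeakK} turns the second term into $\frac{s}{2}$, so $W_{\varphi, \psi}(H^{\pm}_{\varphi, \psi}) = \frac12\left(1 \pm \frac{1}{|\la\varphi|\psi\ra|}\right)$, as claimed. The only step that needs any care is the tangency/extremality argument in the first paragraph; everything after it is pure bookkeeping translating between the trace-$0$ space $\mc{S}$ and the trace-$1$ space of density operators.
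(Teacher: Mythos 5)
Your proposal is correct and follows essentially the same route as the paper: reduce to the circle $\mc{C}$ in the PPS plane via $\re\left(\mc{W}_{\varphi,\psi}(\mc{K}_{\varphi,\psi}(s,a))\right)=s/2$, find the tangent lines $\mc{K}_{\varphi,\psi}(s)$ by requiring $sP_{\varphi,\psi}$ to lie at distance $1/2$ from the origin, and shift back to the trace-$1$ space. Your explicit computation of $\Tr(P_{\varphi,\psi}^2)=\frac{1}{2}|\la\varphi|\psi\ra|^2$ and the perpendicularity justification simply spell out steps the paper leaves implicit.
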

The values in \eref{Eq:MaxRealWeak} consequently provide the bounds of obtainable real weak values for a given PPS ensemble when we restrict the set of weak observables to state projectors. Note that, as $|\varphi\ra$ and $|\psi\ra$ become closer to orthogonal, the bounds approach $\pm \infty$.

It is interesting to note that the state projectors $H_{\varphi, \psi}^\pm$ always correspond to antipodal points on the Bloch sphere. Consequently, the states onto which they project are orthogonal to each other. The eigenspectrum of a state projector is simply $\{0, 1\}$, and hence $H^+_{\varphi, \psi}$ and $H^-_{\varphi, \psi}$ have the same amplification effect when acted on by $W_{\varphi, \psi}$, just in opposite directions. If the amplification effect is the resource for a particular application, then both $H^+_{\varphi, \psi}$ and $H^-_{\varphi, \psi}$ are optimal.

\begin{figure}
\centering 
\begin{tikzpicture}[scale=3]
\coordinate [] (origin) at (0,0);
\coordinate [label=left:$|\psi\ra\la\psi| - \frac{1}{2}\mbb{I}$] (psi) at (1,0);
\coordinate [label=above left:$|\varphi\ra\la\varphi| - \frac{1}{2}\mbb{I}$] (phi) at (90:1cm);
\coordinate (P) at ($ (phi)!.5!(psi) $) {};
\coordinate [label=right:$H^+_{\varphi, \psi} - \frac{1}{2}\mbb{I}$] (sP) at (45:1cm) {}; 
\coordinate (sPhi) at ($ (phi)+(sP)-(P) $) {};
\coordinate [thick, label=right:$\mc{K}_{\varphi, \psi}\left(\frac{1}{|\la\varphi|\psi\ra|}\right)$] (sKtop) at ($ (sP)!1.5!(sPhi) $){};
\coordinate (sPsi) at ($ (psi)+(sP)-(P) $) {}; 
\coordinate  (sKbot) at ($ (sP)!3!(sPsi) $){};

\coordinate [label=left:$H^-_{\varphi, \psi} - \frac{1}{2}\mbb{I}$] (-sP) at (225:1cm) {};
\coordinate (-sPhi) at ($ (phi)+(-sP)-(P) $) {};
\coordinate [thick, label=above:$\mc{K}_{\varphi, \psi}\left(-\frac{1}{|\la\varphi|\psi\ra|}\right)$](-sKtop) at ($ (-sP)!3!(-sPhi) $){};
\coordinate (-sPsi) at ($ (psi)+(-sP)-(P) $) {}; 
\coordinate  (-sKbot) at ($ (-sP)!1.5!(-sPsi) $){};

\coordinate [thick, label=below:$\mc{K}_{\varphi, \psi}(1)$] (Kbot) at ($ (phi)!2!(psi) $){};
\coordinate (Ktop) at ($ (psi)!1.5!(phi) $){};

\node (psinode) at (psi) [circle, draw, fill, inner sep=1.5pt]{};
\node (phinode) at (phi) [circle, draw, fill, inner sep=1.5pt]{};
\node (sPnode) at (sP) [circle, draw, fill, inner sep=1.5pt]{};
\node (sPnode) at (-sP) [circle, draw, fill, inner sep=1.5pt]{};

\draw (Ktop)--(Kbot)[dashed];
\draw (sKtop)--(sKbot);
\draw (-sKtop)--(-sKbot);

\node (Circ) [draw,circle through=(phi)] at (origin) {};

\end{tikzpicture}%
\caption{Shifting the line $\mc{K}_{\varphi, \psi}(s)$ along the PPS plane until it is tangent to the circle provides a simple method of finding the state projector that maximizes or minimizes the corresponding real weak value.}\label{Fig:CirclePlot2}%
\end{figure}

A similar geometric approach can be used to characterize the projectors that maximize and minimize the imaginary component of the weak value. For a given PPS ensemble $(|\varphi\ra, |\psi\ra)$, choose $|\gamma\ra$ to be mutually unbiased to both $|\varphi\ra$ and $|\psi\ra$, and define $\omega$ accordingly as in  Eq. \eref{Eq:Omega}. Observe that, for any $s \in \mbb{R}$, $\im(\mc{W}_{\varphi, \psi}(\mc{K}_{\varphi, \psi}(s,a))) = \frac{a}{2}\tan\left(\frac{\omega}{2}\right)$. Consequently, the projectors that maximize and minimize the imaginary component of the weak value are those whose images in the trace-0 subspace are farthest away from the PPS plane. Thus, we obtain the following result.

\begin{theorem}
For a given PPS ensemble $(\varphi\ra, |\psi\ra )$, let $|\gamma\ra$ be a state mutually unbiased to both $|\varphi\ra$ and $|\psi\ra$. Then the state projectors whose imaginary components are the largest and smallest are given by $\pgamma$ and $|\gamma_\bot\ra\la\gamma_\bot|$, where $|\gamma_\bot\ra$ is the orthogonal complement of $|\gamma\ra$. The imaginary components of the corresponding weak values are given by $\pm \tan\left(\frac{\omega}{2}\right)$, where $\omega$ is given as in Eq. \eref{Eq:Omega}.
\end{theorem}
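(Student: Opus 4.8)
The plan is to re-run the extremization argument just used to locate $H^\pm_{\varphi,\psi}$, but now displacing a Bloch point \emph{perpendicular} to the PPS plane rather than within it. The first step is to pass to the trace-$0$ picture: a state projector $\ppsi'$ has trace $1$, so by Eq.~\eref{Eq:TotalWeakValue} (equivalently by Eqs.~\eref{Eq:Trace_1}--\eref{Eq:Trace_2} with $n=2$) its weak value differs from $\mc{W}_{\varphi,\psi}(\ppsi' - \id)$ only by the real constant $\frac{1}{2}$; hence $\im(W_{\varphi,\psi}(\ppsi')) = \im(\mc{W}_{\varphi,\psi}(\ppsi' - \id))$. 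As $|\psi'\ra$ ranges over all pure states, $\ppsi' - \id$ ranges over the entire Bloch sphere, i.e.\ over the points of $\mc{S}$ at distance $\frac{1}{2}$ from the origin, so the task reduces to extremizing $\im(\mc{W}_{\varphi,\psi}(M))$ over that sphere.

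Next I would use a coordinate that is already in place. The space $\mc{S}$ is three-dimensional, $H_{pps}$ is a plane through the origin, and by Section~\ref{Sec:GeomOfDensityMatrices} the mutual unbiasedness of $|\gamma\ra$ with each of $|\varphi\ra$ and $|\psi\ra$ is equivalent to orthogonality of $(\pgamma-\id)$ with each of $(\pvarphi-\id)$ and $(\ppsi-\id)$; hence the nonzero vector $(\pgamma-\id)$ spans the (one-dimensional) orthogonal complement of $H_{pps}$ in $\mc{S}$. Consequently every $M\in\mc{S}$ decomposes uniquely as $M = M_\parallel + a(\pgamma-\id)$ with $M_\parallel\in H_{pps}$; that is, $M$ lies on the line $\mc{K}_{\varphi,\psi}(s,a)$ for a suitable $s$, and the identity recorded just above, $\im(\mc{W}_{\varphi,\psi}(\mc{K}_{\varphi,\psi}(s,a))) = \frac{a}{2}\tan\left(\frac{\omega}{2}\right)$, shows that the imaginary part of the weak value depends only on $a$.

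The last step is the Pythagorean identity for the scalar product \eref{Eq:ScalarProduct}. Since $(\pgamma-\id)$ is orthogonal to $H_{pps}$ and $D^2(\pgamma-\id,0) = \frac{1}{2}\Tr(\pgamma-\id)^2 = \frac{1}{4}$, the decomposition $M = M_\parallel + a(\pgamma-\id)$ gives $D^2(M,0) = D^2(M_\parallel,0) + \frac{a^2}{4}$. Requiring $M$ to lie on the Bloch sphere, $D^2(M,0) = \frac{1}{4}$, forces $a^2 = 1 - 4\,D^2(M_\parallel,0) \le 1$, with equality if and only if $M_\parallel = 0$, i.e.\ $M = \pm(\pgamma-\id)$. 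Since $|\varphi\ra$ and $|\psi\ra$ are distinct and nonorthogonal we have $\omega\notin\{0,\pi\}$ (as in the proof of Lemma~\ref{Lem:ImaginaryWeak}), so $\tan\left(\frac{\omega}{2}\right)$ is finite and nonzero and the endpoints $a = \pm 1$ give a strict maximum and a strict minimum of $\im(\mc{W}_{\varphi,\psi}(M))$ over the sphere. Translating back, $M = \pgamma - \id$ is the Bloch point of the projector $\pgamma$, while $M = -(\pgamma-\id) = |\gamma_\bot\ra\la\gamma_\bot| - \id$ is the Bloch point of $|\gamma_\bot\ra\la\gamma_\bot|$ (using $\pgamma + |\gamma_\bot\ra\la\gamma_\bot| = \mbb{I}$); the two extremal imaginary components then follow directly from Eq.~\eref{Eq:AwGamma} together with $W_{\varphi,\psi}(\mbb{I}) = 1$ and linearity, the sign of $\omega$ deciding which of $\pgamma$ and $|\gamma_\bot\ra\la\gamma_\bot|$ realizes the maximum and which the minimum.

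I expect no genuine obstacle here: once the formula $\im(\mc{W}_{\varphi,\psi}(\mc{K}_{\varphi,\psi}(s,a))) = \frac{a}{2}\tan\left(\frac{\omega}{2}\right)$ is in hand, this theorem is essentially a corollary of it. The only point that needs a word of care is confirming that $(\pgamma-\id)$ spans \emph{all} of the orthogonal complement of $H_{pps}$ in $\mc{S}$ — so that the formula applies at every point of the sphere rather than on some one-parameter subfamily — and that a state $|\gamma\ra$ mutually unbiased to both $|\varphi\ra$ and $|\psi\ra$ exists at all; both are immediate from the dimension count ($\dim\mc{S} = 3$, $\dim H_{pps} = 2$) and the unbiasedness-equals-orthogonality dictionary of Section~\ref{Sec:GeomOfDensityMatrices}. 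Everything else is the same tangency/extremal-displacement reasoning already used for the real part, with ``tangent line in the plane'' replaced by ``pole of the sphere along the normal to the plane.''
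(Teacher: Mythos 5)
Your proposal is correct and follows essentially the same route as the paper: the paper likewise observes that $\im\bigl(\mc{W}_{\varphi,\psi}(\mc{K}_{\varphi,\psi}(s,a))\bigr)=\frac{a}{2}\tan\left(\frac{\omega}{2}\right)$ depends only on the perpendicular displacement $a$ from $H_{pps}$, so the extremizing projectors are the Bloch-sphere points farthest from the PPS plane, namely $\pm(\pgamma - \id)$; your orthogonal-complement/Pythagoras bound $a^2\le 1$ simply makes that tangency argument explicit. One remark: your computation (consistent with Eq.~\eref{Eq:AwGamma} and with the $p=0$ case of the noise formulas in Section~\ref{Sec:Generalization}) yields extremal imaginary components $\pm\frac{1}{2}\tan\left(\frac{\omega}{2}\right)$, so the value $\pm\tan\left(\frac{\omega}{2}\right)$ quoted in the theorem statement carries a spurious factor of $2$ --- a slip in the statement, not in your argument.
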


It follows that the range of the imaginary components of the weak values are bounded by $\pm \tan\left(\frac{\omega}{2}\right)$ for the given PPS ensemble when we restrict the set of weak observables to state projectors. Again, as in the real case, as $|\varphi\ra$ and $|\psi\ra$ become closer to orthogonal, the bounds on the range of imaginary components approach $\pm \infty$. Also as in the real case, the projectors that maximize and minimize the imaginary component of the weak value correspond to states that are orthogonal to each other. Figure \ref{Fig:ImaginaryBloch} shows how finding the projectors that are farthest away from the PPS plane is equivalent to shifting the PPS plane up and down until it is tangent to the Bloch sphere. The tangent points then correspond to the operators whose imaginary component is maximized or minimized.

\begin{figure}
\centering
\includegraphics[width=\textwidth]{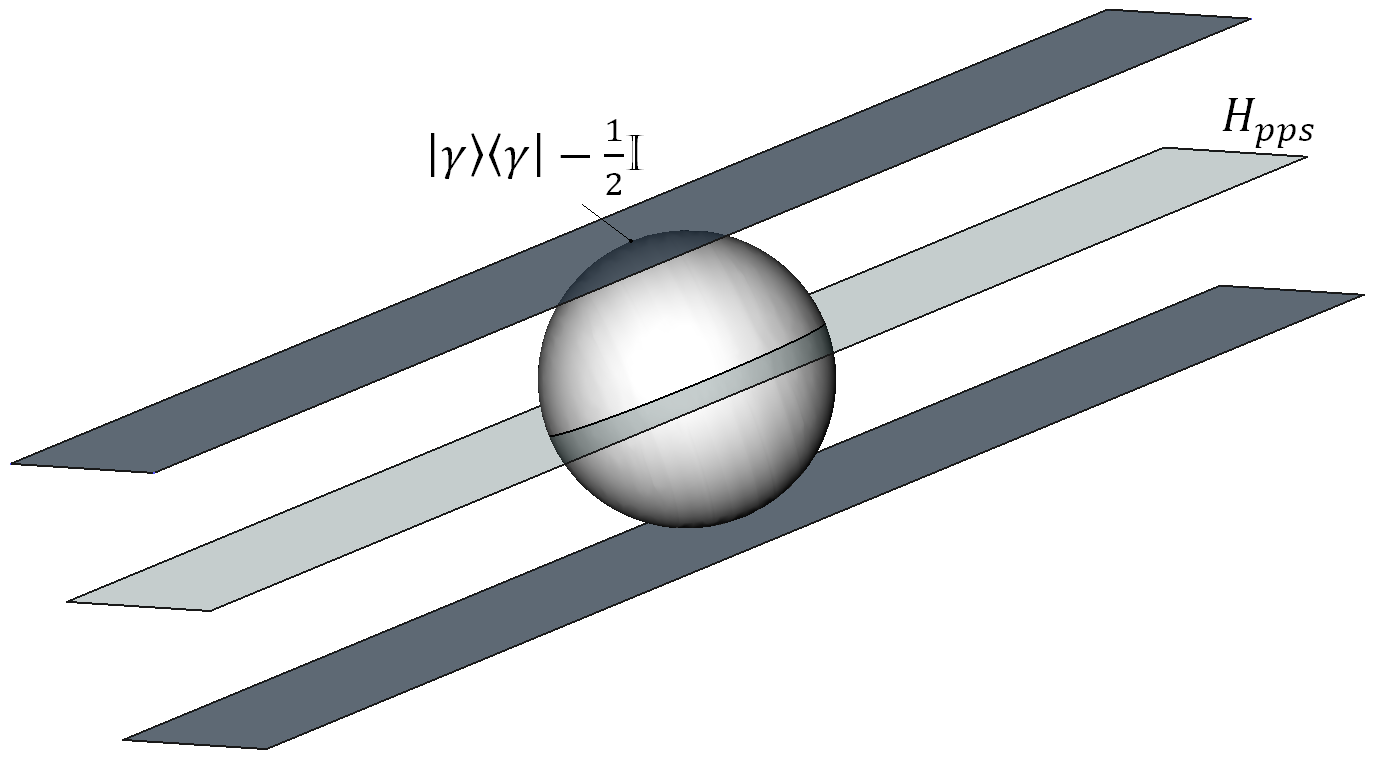}
\caption{If the PPS plane $H_{pps}$ is shifted up and down until it is tangent to the Bloch sphere, the tangent points are then the points that are farthest from the PPS plane on the Bloch sphere. These two points correspond precisely to the two states, $|\gamma\ra$ and $|\gamma_\bot\ra$, that are mutually unbiased to both states in the PPS ensemble.}
\label{Fig:ImaginaryBloch}
\end{figure}

\section{A Generalization}\label{Sec:Generalization}
Thus far, we have assumed that the pre- and post-selected states are both pure. However, this is often not the case in most practical implementations. While a state may begin as pure, by the time it arrives at the moment of weak measurement, it likely has undergone some form of noise. This should obviously affect the weak measurement results. We are interested in generalizing our definition of weak functions such that the pre-selected state may be a density operator. Doing so will help us understand how noise affects a weak value. In particular, if we begin with a designated pure state and we post-select on a designated pure state, then we can easily calculate what the expected weak value of a particular Hermitian operator \emph{should} be had no noise occurred. Comparing this expected weak value against an experimentally obtained weak value should reveal some information about the noise.

Because the weak value is conditioned on both the pre- and post-selected states, one may expect that noise occurring between the moment of weak measurement and the moment of post-selection may also affect the weak value. However, if we make the weak measurement immediately prior to post-selection, then any noise between weak measurement and post selection is essentially negligible. Consequently, all of the noise parameters will be captured in the pre-selected state.

If the pre-selected state is a density operator $\rho$ and the post-selected state is a pure state $|\psi\ra$, then using the generalized weak value formula of \cite{DJ12}, we may define the generalized weak function on the PPS ensemble $(\rho, |\psi\ra)$ as a complex linear operator
\begin{equation}
W_{\rho, \psi}: \mbb{H} \rightarrow \mbb{C}
\end{equation}
given by
\begin{equation}\label{Eq:GenWeak}
W_{\rho, \psi}(M): = \frac{\Tr(\ppsi M \rho)}{\Tr(\ppsi\rho)}.
\end{equation}
In this more general setting, the assumption that the PPS pair be distinct and non-orthogonal generalizes to the assumption that $\rho \not\in \aff\{\ppsi, |\psi_\bot\ra\la\psi_\bot|\}$.

Because $\rho$ is a density operator, we may always express it as a convex sum of projectors onto orthogonal pure states, that is, $\rho = (1-p)\pvarphi + p|\varphi_\bot\ra\la\varphi_\bot|$ for some $p \in [0,1]$ and some pure state $|\varphi\ra$ with orthogonal complement $|\varphi_\bot\ra$. Assuming $\rho \not\in \aff\{\ppsi, |\psi_\bot\ra\la\psi_\bot|\}$ is equivalent to assuming that $|\varphi\ra \not\in \{|\psi\ra, |\psi_\bot\ra\}$ and that $p \neq \frac{1}{2}$. If these conditions are true, then there exists a corresponding weak value function $W_{\rho, \psi}$, and for any $M \in \mbb{H}$, we may expand \eref{Eq:GenWeak} in the following way:
\begin{eqnarray}
 W_{\rho, \psi}(M) &= \frac{\Tr(|\psi\ra\la\psi|M\rho)}{\Tr(|\psi\ra\la\psi|\rho)} &\ \ \ \ \ \ \ \ \  \\
 \ &= \frac{\Tr\left[(1-p)\ppsi M\pvarphi + p\ppsi M|\varphi_\bot\ra\la\varphi_\bot|\right]}{\Tr(\ppsi\rho)} &\ \\
 \ &= \frac{(1-p)\la\psi|M\pvarphi\psi\ra + p\la\psi|M|\varphi_\bot\ra\la\varphi_\bot|\psi\ra}{\Tr(\ppsi\rho)} &\ \\ \label{Eq:WeakStatistical}
 \ &= \frac{(1-p)|\la\psi|\varphi\ra|^2}{\Tr(\ppsi\rho)}W_{\varphi, \psi}( M) + \frac{p|\la\psi|\varphi_\bot\ra|^2}{\Tr(\ppsi\rho)}W_{\varphi_\bot, \psi}( M). &\ 
\end{eqnarray}
In particular, when $p = 0$, then $\rho = \pvarphi$, and the generalized weak function reduces to the standard weak function formalism of \eref{Eq:WeakDef}.

We see then, that the generalized weak function can always be expressed as a statistical mixture of pure state weak functions. In fact, if $|\varphi\ra$ and $|\psi\ra$ are mutually unbiased, then so must be $|\psi\ra$ and $|\varphi_\bot\ra$, and the above significantly simplifies to
\begin{equation}\label{Eq:MutUnbiased}
 W_{\rho, \psi}(M) = (1-p) W_{\varphi, \psi}( M) + p W_{\varphi_\bot, \psi}( M).
\end{equation}

Geometrically, we can interpret quantum mechanical noise as some action on the Bloch ball. In particular, the effect of depolarizing noise is to uniformly shrink the Bloch ball towards its center. Consequently, the effect of depolarizing noise on density operators is to map a state projector $\pvarphi$ to the point $(1-p)\pvarphi + p|\varphi_\bot\ra\la\varphi_\bot|$ for some $p \in [0, \frac{1}{2})$. If we are trying to characterize the depolarizing noise, then the parameter we are trying to obtain is $p$. 

By \eref{Eq:WeakStatistical}, we see that, given any PPS ensemble $(|\varphi\ra, |\psi\ra)$, we may weakly measure essentially any Hermitian operator $M \in \mbb{H}$ (to obtain the value $W_{\rho, \psi}(M)$, where $\rho = (1-p)\pvarphi + p|\varphi_\bot\ra\la\varphi_\bot|$), and then by decomposing the experimentally obtained value as in \eref{Eq:WeakStatistical}, we may infer the value $p$. However, taking practical experimental concerns into consideration, there may be certain operators that are more preferable to measure than others. In particular, we would like to compare the observed weak value $W_{\rho, \psi}(M)$ of the operator $M$ against the expected weak value $W_{\varphi, \psi}(M)$ had there been no noise at all. Depolarizing noise always decreases the magnitude of the weak value. Due to finite sample size and resolution, this difference between observed and expected weak values becomes easier to experimentally detect the larger the magnitude of the expected weak value. If we are further restricted to only measuring the real or imaginary components of a weak value, and our weak measurement observable is a pure state projector, then the ideal choice of measurement observables are those given by Equation \eref{Eq:H_MaxMin} in the case that we are measuring the change in the real component, and $|\gamma\ra\la\gamma|$ and $|\gamma_\bot\ra\la\gamma_\bot|$ when we are measuring the change in imaginary component, where again $|\gamma\ra$ is a state mutually unbiased to both $|\varphi\ra$ and $|\psi\ra$. 

Inferring the value $p$ becomes exceptionally easier in the event that $|\varphi\ra$ and $|\psi\ra$ are mutually unbiased, as Equation \eref{Eq:MutUnbiased} shows. Simple calculations show that $W_{\varphi_\bot, \psi}(H_{\varphi, \psi}^\pm) = \frac{1}{2}$, so that in the mutually unbiased case, we obtain
\begin{eqnarray}
W_{\rho, \psi}(H^\pm) &= \frac{p}{2} + (1-p)W_{\varphi, \psi}(H_{\varphi, \psi}^\pm)\\
 \                    &= \frac{1}{2}\left( 1 \pm (1-p)\sqrt{2}\right).
\end{eqnarray}
Likewise, in the mutually unbiased case, we have $\omega = \pm\frac{\pi}{2}$, and hence weakly measuring $|\gamma\ra\la\gamma|$ returns
\begin{equation}
W_{\rho, \psi}(\pgamma) = \frac{1}{2}\left( 1 \pm (1-2p)i\right),
\end{equation}
so that the imaginary component is simply $\pm(\frac{1}{2} - p)$, where the sign is determined by the sign of $\omega$. Similar calculations show that $W_{\rho, \psi}(|\gamma_\bot\ra\la\gamma_\bot|)$ is the complex conjugate of $W_{\rho, \psi}(\pgamma)$, so that the imaginary component of $W_{\rho, \psi}(|\gamma_\bot\ra\la\gamma_\bot|)$ is simply the negative of that of $W_{\rho, \psi}(\pgamma)$.

In order to most easily experimentally distinguish between the expected (e.g. noiseless) and observed (e.g. noisy) weak values, we would like the difference between expected and calculated values to be as large as possible; that is, we would like $|\re(W_{\varphi, \psi}(M)) - \re(W_{\rho, \psi}(M))|$ or $|\im(W_{\varphi, \psi}(M)) - \im(W_{\rho, \psi}(M))|$ to be maximized. Interestingly, while it is the case that for all $p \in [0,1)$, we have $|\re(W_{\rho, \psi}(H^+))| > |\im(W_{\rho, \psi}(\pgamma))|$, it is also the case that for all $p \in [0, 1)$, the differences between the magnitudes of the observed and expected weak values satisfy
\begin{eqnarray*}
|\im(W_{\varphi, \psi}(\pgamma)) &- \im(W_{\rho, \psi}(\pgamma))|\\
        \ &\geq |\re(W_{\varphi, \psi}(H_{\varphi, \psi}^\pm)) - \re(W_{\rho, \psi}(H_{\varphi, \psi}^\pm))|,
\end{eqnarray*}
with equality only in the case of $p=0$. Consequently, $\pgamma$ (or equivalently $|\gamma_\bot\ra\la\gamma_\bot|$) is likely the more preferable observable to measure to characterize the noise in this scenario.


\section{Conclusion}\label{Sec:Conclude}
In conclusion, we have shown that the weak value of an observable conditioned on a PPS ensemble $(|\varphi\ra, |\psi\ra)$ of pure states can be viewed mathematically as the image of a well-defined linear function $W_{\varphi, \psi}$ from the vector space $\mbb{H}$ of Hermitian matrices to the set $\mbb{C}$ of complex numbers. Because of the inherent Euclidean structure underlying the Hermitian vector space, we can use this map to characterize geometric invariants of weak values. In particular, when restricting to qubit weak measurements over the subspace $\mc{S}$ of trace-0 Hermitian matrices, each weak value (i.e. each complex number) corresponds to a unique affine line in $\mc{S}$. This geometric characterization, at least when restricted to qubits, has great utility for, e.g. finding the weak projectors that maximize the real or imaginary components of the weak value, given a particular PPS ensemble. Moreover, we have shown that the weak value of an observable $M \in \mbb{H}$ can be decomposed into different geometric components. In particular, we may always write
\begin{equation}
W_{\varphi, \psi}(M) = \frac{1}{2} \left(\Tr(M) + s +ia\tan\left(\frac{\omega}{2}\right)\right),
\end{equation}
where $\Tr(M)$ indicates in which affine shift of $\mc{S}$ the operator $M$ lies, $s$ corresponds to the invariant line on which the image of $M$ in $\mc{S}$ lies, and $a$ corresponds to the distance of the image of $M$ in $\mc{S}$ from the PPS plane $H_{pps}$. The term $\omega$, however, is determined by the weak function itself. While it can be characterized as a geometric relation between the pre- and post-selected states (as in Eq. \eref{Eq:Omega}), it has no relation to the choice of operator $M$.

Additionally, we have shown a straightforward way to generalize the weak value functions such that the pre-selected system in the PPS ensemble defining the function may be a density operator. Doing this provides a method to characterize large classes of channel noise via the attenuation of the weak value. We demonstrated how to explicitly calculate the effects of depolarizing noise when the initial state (prior to noise) and post-selected state were mutually unbiased, and use the geometric methods devised in this paper to determine the optimal state projectors to weakly measure. The same methods can be adapted to characterize a variety of noise classes. For example, the amplitude damping channel maps an arbitrary state projector $\pvarphi$ to $(1-p)\pvarphi + p|0\ra\la 0|$, for some $p \in [0, 1)$, where $|0\ra$ is generally taken to be the $+1$-eigenstate of the Pauli spin-$Z$ operator. In order to fully characterize the amplitude damping channel, we need to determine the value $p$. With only slight modifications, we may use the same methods to characterize amplitude damping channel noise as well.

In a prepare-and-measure quantum information processing task, the preparation and measurement bases are generally pre-determined. By recasting weak measurements and weak values in terms of linear functions defined by the pre- and post-selected states, we have made it particularly useful to analyze the effects of weak measurements in existing prepare-and-measure quantum information tasks. For example, we can envision augmenting the BB84 QKD protocol so that the receiver makes weak measurements immediately prior to strongly measuring in a particular basis. Given the possible PPS ensembles in BB84, we can use the techniques in this paper to easily determine which observables to weakly measure to obtain the maximum information about the quantum channel. When the bases disagree, we can use the attenuation in the corresponding weak values to characterize the channel noise, and consequently perform security analysis. Such an augmented BB84 QKD protocol was recently proposed in \cite{TF15}.

Admittedly, there are obvious practical shortfalls in using the generalized weak values to characterize noise. In particular, we have implicitly assumed perfect detectors, and we have arranged the experiments such that the weak measurement occurs immediately before post-selection so that any noise that might occur between the weak measurement and post selection can be treated as negligible. Indeed, even the best single photon detectors will have some nonzero dark count rate. Moreover, while a weak measurement does not collapse the quantum system, it will introduce a very marginal amount of disturbance which will show up as an extremely small, yet nonzero, error factor between the weak measurement and post-selection. Generalizing the weak value functions further such that both elements of the PPS ensemble are density operators will be a topic addressed in future work. We note that \cite{DJ12} provides a characterization of weak measurements in which the post-selection can be the result of any POVM. Using this characterization of weak measurements, we should be able to obtain the desired generalization.

Another obvious next step is to generalize all of these results to higher-dimensional quantum systems. While we have defined weak functions over $n \times n$ Hermitian matrices for any integer $n \geq 2$, we have thus far only characterized the geometric aspects of these functions in the case of $n = 2$. Generalizing to the case of $n > 2$ is an obvious, yet daunting next step, given that the dimension of $\mbb{H}$ grows quadratically with $n$. Even determining necessary and sufficient conditions for a weak measurement to yield an entirely real weak value in this case is highly nontrivial, though we have some conjectures.

In particular, suppose $|\varphi_0\ra$ and $|\psi_0\ra$ are distinct, nonorthogonal \emph{qunits} (that is, they exist as unit-norm vectors in $\mbb{C}^n$). Suppose $\{|\varphi_0\ra, |\varphi_1\ra, \dots, |\varphi_{n-1}\ra\}$ and $\{|\psi_0\ra, |\psi_1\ra, \dots, |\psi_{n-1}\ra\}$ are each orthonormal bases containing $|\varphi_0\ra$ and $|\psi_0\ra$, respectively. Let $\mc{P}_\varphi = \{ |\varphi_0\ra\la\varphi_0| - \frac{1}{n}\mbb{I}, \dots, |\varphi_{n-1}\ra\la\varphi_{n-1}| - \frac{1}{n}\mbb{I}\}$, and define $\mc{P}_\psi$ likewise. That is, the elements of $\mc{P}_\varphi$ and $\mc{P}_\psi$ are the images of the projectors onto the corresponding basis elements in the trace-0 subspace $\mc{S}$ of $\mbb{H}$. Let $\mc{R}_{\varphi, \psi} = \Span_\mc{S}\{\mc{P}_\varphi, \mc{P}_\psi\}$. Then straightforward calculations show the following result.

\begin{proposition}
Let $M \in \mc{R}_{\varphi, \psi}$. Then $\mc{W}_{\varphi_i, \psi_j}(M) \in \mbb{R}$, for any $i,j \in \{0, 1, \dots n-1\}$.
\end{proposition}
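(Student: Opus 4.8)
The plan is to exploit the complex-linearity of the weak functions together with the orthonormality of the two bases, which reduces the claim to a one-line computation on the spanning set. By definition, every $M \in \mc{R}_{\varphi, \psi}$ is a \emph{real} linear combination of the elements of $\mc{P}_\varphi \cup \mc{P}_\psi$. Since each $\mc{W}_{\varphi_i, \psi_j}$ is $\mbb{C}$-linear, it is in particular $\mbb{R}$-linear, so it suffices to show that the images of the generators $|\varphi_k\ra\la\varphi_k| - \frac{1}{n}\mbb{I}$ and $|\psi_l\ra\la\psi_l| - \frac{1}{n}\mbb{I}$ under $\mc{W}_{\varphi_i, \psi_j}$ are real for every admissible triple of indices $i, j, k$ (resp. $i, j, l$); the conclusion then follows by taking the corresponding real combination of real numbers.

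For the computation itself I would first record the triviality $W_{\varphi_i, \psi_j}(\mbb{I}) = 1$, so that the $-\frac{1}{n}\mbb{I}$ term always contributes the real constant $-\frac{1}{n}$. It then remains to evaluate $W_{\varphi_i, \psi_j}(|\varphi_k\ra\la\varphi_k|) = \frac{\la\psi_j|\varphi_k\ra\la\varphi_k|\varphi_i\ra}{\la\psi_j|\varphi_i\ra}$. Because $\{|\varphi_0\ra, \dots, |\varphi_{n-1}\ra\}$ is orthonormal, $\la\varphi_k|\varphi_i\ra = \delta_{ki}$, so this quantity vanishes when $k \neq i$ and equals $\frac{\la\psi_j|\varphi_i\ra}{\la\psi_j|\varphi_i\ra} = 1$ when $k = i$; either way it is real, whence $\mc{W}_{\varphi_i, \psi_j}(|\varphi_k\ra\la\varphi_k| - \frac{1}{n}\mbb{I}) = \delta_{ki} - \frac{1}{n} \in \mbb{R}$. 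The computation for the $\psi$-projectors is symmetric: using $\la\psi_j|\psi_l\ra = \delta_{jl}$ one gets $\mc{W}_{\varphi_i, \psi_j}(|\psi_l\ra\la\psi_l| - \frac{1}{n}\mbb{I}) = \delta_{jl} - \frac{1}{n} \in \mbb{R}$. Summing the real coefficients of $M$ against these real values finishes the argument.

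There is essentially no obstacle in this direction — the only point needing a word of care is that one should restrict to those pairs $(i,j)$ for which $\mc{W}_{\varphi_i, \psi_j}$ is actually defined, i.e. for which $\la\psi_j|\varphi_i\ra \neq 0$; the hypotheses guarantee this for $(i,j)=(0,0)$, and the statement is read as applying to whichever further pairs happen to be non-orthogonal. The genuinely hard question, which we flag as open, is the converse: whether $\mc{W}_{\varphi_i, \psi_j}(M) \in \mbb{R}$ for all admissible $i,j$ forces $M \in \mc{R}_{\varphi, \psi}$. That would be the true $n$-dimensional analogue of Theorem \ref{Thm:RealWeak}, and unlike the sufficiency direction above it will require genuinely new ideas — in particular a replacement for the qubit fact, used in Lemma \ref{Lem:ImaginaryWeak}, that a single mutually-unbiased "off-plane" direction already produces an unavoidable imaginary part.
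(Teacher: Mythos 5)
Your proof is correct and is exactly the ``straightforward calculation'' the paper alludes to without writing out: reduce by $\mbb{R}$-linearity to the generators, use orthonormality to get $\mc{W}_{\varphi_i,\psi_j}\bigl(|\varphi_k\ra\la\varphi_k| - \tfrac{1}{n}\mbb{I}\bigr) = \delta_{ki} - \tfrac{1}{n}$ and $\mc{W}_{\varphi_i,\psi_j}\bigl(|\psi_l\ra\la\psi_l| - \tfrac{1}{n}\mbb{I}\bigr) = \delta_{jl} - \tfrac{1}{n}$, both real. Your caveat about restricting to pairs $(i,j)$ with $\la\psi_j|\varphi_i\ra \neq 0$ is a sensible reading of the statement, and your closing remark correctly identifies the converse as the paper's stated open conjecture rather than part of this proposition.
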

In the case $n = 2$, the space $\mc{R}_{\varphi, \psi}$ is precisely the PPS plane $H_{pps}$. Consequently, we conjecture the following:
\begin{conjecture}
Let $M \in \mc{S}$. Then for all $i,j \in \{0, 1, \dots n-1\}$, $\mc{W}_{\varphi_i, \psi_j}(M) \in \mbb{R}$ if and only if $M \in \mc{R}_{\varphi, \psi}$.
\end{conjecture}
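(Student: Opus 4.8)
The plan is to convert the realness of all the weak values into a single orthogonality condition in the Euclidean space $\mc{S}$, and then to reduce the conjecture to a dimension count. Write $P_i := |\varphi_i\ra\la\varphi_i|$ and $Q_j := |\psi_j\ra\la\psi_j|$. For Hermitian $M$ one has $\mc{W}_{\varphi_i,\psi_j}(M) = \Tr(MP_iQ_j)/|\la\varphi_i|\psi_j\ra|^2$ when $\la\varphi_i|\psi_j\ra\neq 0$, and since $\overline{\Tr(MP_iQ_j)} = \Tr(MQ_jP_i)$ we get $2i\,\im\big(\Tr(MP_iQ_j)\big) = \Tr\big(M[P_i,Q_j]\big)$; when $\la\varphi_i|\psi_j\ra = 0$ the commutator $[P_i,Q_j]$ already vanishes, so that pair imposes nothing. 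Each $-i[P_i,Q_j]$ is a trace-$0$ Hermitian operator, so the statement ``$\mc{W}_{\varphi_i,\psi_j}(M)\in\mbb{R}$ for every admissible $i,j$'' is exactly ``$M\perp V$'' in the scalar product \eref{Eq:ScalarProduct}, where $V := \Span_\mc{S}\{-i[P_i,Q_j] : 0\le i,j\le n-1\}$. Hence the set of admissible $M$ is the orthogonal complement of $V$ inside $\mc{S}$, and the conjecture asserts precisely that $V^{\perp} = \mc{R}_{\varphi,\psi}$.

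First I would record the inclusion $\mc{R}_{\varphi,\psi}\subseteq V^{\perp}$, which is just the stated Proposition re-derived: by $P_kP_i=\delta_{ki}P_i$ and cyclicity of the trace, $\Tr\big((P_k-\frac{1}{n}\mbb{I})[P_i,Q_j]\big) = \Tr\big([P_k,P_i]Q_j\big) = 0$ because the spectral projectors of one basis commute, and symmetrically for the $Q$'s. So $V\perp\mc{R}_{\varphi,\psi}$, and the conjecture reduces to the single identity $\dim V + \dim\mc{R}_{\varphi,\psi} = \dim\mc{S} = n^2-1$. Two of the pieces are routine: $\Span_\mc{S}(\mc{P}_\varphi)$ and $\Span_\mc{S}(\mc{P}_\psi)$ are each $(n-1)$-dimensional (the traceless operators diagonal in the respective basis), so $\dim\mc{R}_{\varphi,\psi} = 2(n-1) - d$ with $d := \dim\big(\Span_\mc{S}(\mc{P}_\varphi)\cap\Span_\mc{S}(\mc{P}_\psi)\big)$; and the relations $\sum_i[P_i,Q_j] = [\mbb{I},Q_j] = 0$ and $\sum_j[P_i,Q_j] = 0$ supply $2n-1$ independent dependencies among the $n^2$ commutators, whence $\dim V \le (n-1)^2$. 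Therefore $\dim V + \dim\mc{R}_{\varphi,\psi} \le n^2-1-d$, and equality with $n^2-1$ forces both $d=0$ and $\dim V = (n-1)^2$.

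This already shows the conjecture as literally stated needs a hypothesis: if the two bases share a proper subspace spanned by a subset of each (equivalently $d>0$; for instance $n=3$ with $|\varphi_2\ra = |\psi_2\ra$, where every nonzero commutator is a scalar multiple of $[P_0,Q_0]$ so $\dim V = 1$ while $\dim\mc{R}_{\varphi,\psi} = 3$), then $V^{\perp}$ is strictly larger than $\mc{R}_{\varphi,\psi}$ and there are trace-$0$ observables outside $\mc{R}_{\varphi,\psi}$ all of whose weak values are real. So the thing to prove is the conjecture under the added assumption $\Span_\mc{S}(\mc{P}_\varphi)\cap\Span_\mc{S}(\mc{P}_\psi) = \{0\}$ — automatic when $n=2$, consistent with Theorem \ref{Thm:RealWeak} — and by the bookkeeping above this comes down to one assertion: if $d=0$ then $\dim V = (n-1)^2$, i.e. the commutators $\{[P_i,Q_j]:1\le i,j\le n-1\}$ are linearly independent.

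That lemma is the main obstacle. My plan for it is to pass to the $|\varphi\ra$-basis and encode the two bases through the unitary overlap matrix $U = (\la\varphi_i|\psi_j\ra)$. Testing a hypothetical relation $\sum_{ij}\lambda_{ij}[P_i,Q_j] = 0$ (taking $\lambda_{ij}\in\mbb{R}$ loses nothing, since each $-i[P_i,Q_j]$ is Hermitian) against the matrix elements $\la\varphi_k|\,\cdot\,|\varphi_l\ra$ collapses it to the scalar system $\sum_j(\lambda_{kj}-\lambda_{lj})\la\varphi_k|\psi_j\ra\la\psi_j|\varphi_l\ra = 0$ for all $k,l$; the ``trivial'' solutions $\lambda_{ij} = \mu_i+\nu_j$ span exactly the known $(2n-1)$-dimensional dependency space, and the claim is that irreducibility of the pair of bases forces these to be all of them. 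I would attack this by contraposition: a genuinely new solution should, after sorting the indices, exhibit a partition of $\{0,\dots,n-1\}$ relative to which $U$ is block-diagonal — i.e. a subset of the $|\varphi_i\ra$ and a subset of the $|\psi_j\ra$ spanning the same proper subspace, which is exactly $d>0$. Carrying this out cleanly, and certifying that nothing beyond $d=0$ is needed, is where the real work lies; everything else is the dimension accounting above.
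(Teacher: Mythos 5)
The paper offers no proof of this statement---it is explicitly left as an open conjecture---so your proposal can only be assessed on its own merits, not against an argument of the authors.

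Your reduction is correct and genuinely clarifying: writing $\mc{W}_{\varphi_i,\psi_j}(M)=\Tr(MP_iQ_j)/|\la\varphi_i|\psi_j\ra|^2$, converting realness into $\Tr(M[P_i,Q_j])=0$, noting that zero-overlap pairs contribute vanishing commutators, reproving the paper's Proposition via $\Tr([P_k,P_i]Q_j)=0$, and the dimension bookkeeping $\dim\mc{R}_{\varphi,\psi}=2(n-1)-d$, $\dim V\le (n-1)^2$, are all sound. Your observation that a common block structure ($d>0$) already defeats the literal statement is a real insight, as is the remark that $d=0$ is automatic when every pair $(\varphi_i,\psi_j)$ is non-orthogonal.

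The fatal problem is the lemma you defer: ``$d=0$ implies the commutators $[P_i,Q_j]$, $1\le i,j\le n-1$, are linearly independent'' is false for $n\ge 3$, and with it the conjecture itself fails. Choose the two bases so that the overlap matrix $U=(\la\varphi_i|\psi_j\ra)$ is a real orthogonal matrix with all entries nonzero (for $n=3$, a generic rotation). Then all pairs are distinct and non-orthogonal and $d=0$ (a common invariant subspace spanned by sub-bases would force zero entries of $U$). But in the $\varphi$-basis every $Q_j$ is a real symmetric matrix, so every $[P_i,Q_j]$ is real antisymmetric; hence $V$ lies inside the space of traceless Hermitian matrices with purely imaginary entries, of dimension $n(n-1)/2$, which is strictly less than $(n-1)^2$ once $n\ge 3$. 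So $\dim V$ cannot reach $(n-1)^2$, and your planned contraposition (nontrivial relation $\Rightarrow$ block structure of $U$) cannot succeed. Seen directly: for such bases any real symmetric traceless $M$ has $\la\psi_j|M|\varphi_i\ra$ and $\la\psi_j|\varphi_i\ra$ real, hence \emph{all} its weak values are real; these $M$ form a space of dimension $n(n+1)/2-1$, which exceeds $\dim\mc{R}_{\varphi,\psi}=2(n-1)$ exactly when $n\ge 3$. Thus there are trace-0 observables outside $\mc{R}_{\varphi,\psi}$ with every weak value real, so the necessity direction of the conjecture fails already at $n=3$ (there is no clash with Theorem 1: for $n=2$ with real states the real symmetric traceless operators are precisely $H_{pps}$). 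The hypothesis you need is therefore stronger than $d=0$: you must at least exclude a common real (antiunitary) structure fixing both bases, and any salvageable version of the statement is presumably a genericity claim about $U$ rather than the universally quantified one; your own orthogonality framework is the right tool, but the ``real work'' you postponed is not merely hard---as formulated it is impossible.
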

Determining whether or not the above conjecture is true would be the first step in characterizing the geometry of weak measurements in higher dimensions.

\ack

J. Farinholt acknowledges support from the Quantum Information Science Program, Office of Naval Research Code 31. A. Ghazarians acknowledges support from the Office of Naval Research, Warfare Innovation Cell Knowledge and Education (WICKED), which funded his internship at NSWCDD. J. Troupe acknowledges support from the Office of Naval Research, grant number N00014-15-1-2225.



\section*{References}
\bibliographystyle{unsrt}
\bibliography{bibfile3}

\end{document}